\documentclass[11pt]{article}
\usepackage{amsfonts}
\usepackage{mathrsfs}
\usepackage{amsthm}
\usepackage{amssymb}
\usepackage{amsmath}
\usepackage{enumerate}

\textwidth 150mm \textheight 230mm
\setlength{\oddsidemargin}{0.636cm}
\setlength{\evensidemargin}{0.636cm}
\setlength{\topmargin}{-1.5cm}
\topmargin 0pt

\theoremstyle{plain}
\newtheorem{theorem}{Theorem}[section]
\newtheorem{proposition}[theorem]{Proposition}
\newtheorem{corollary}[theorem]{Corollary}
\newtheorem{lemma}[theorem]{Lemma}

\theoremstyle{definition}
\newtheorem{definition}{Definition}[section]

\theoremstyle{remark}
\newtheorem{remark}{\textbf{Remark}}[section]

\theoremstyle{example}

\numberwithin{equation}{section}


\title{Abstract Model of Continuous-Time Quantum Walk Based on Bernoulli Functionals and Perfect State Transfer}

\author{Ce Wang\\
    Yau Mathematical Sciences Center, Tsinghua University\\
    Beijing 100084, People's Republic of China}

\begin{document}

\maketitle

\noindent\textbf{Abstract.}\ \
In this paper, we present an abstract model of continuous-time quantum walk (CTQW) based on Bernoulli functionals
and show that the model has perfect state transfer (PST), among others.
Let $\mathfrak{h}$ be the space of square integrable complex-valued Bernoulli functionals, which is infinitely dimensional.
First, we construct on a given subspace $\mathfrak{h}_L \subset \mathfrak{h}$ a self-adjoint operator $\Delta_L$ via the canonical unitary involutions on $\mathfrak{h}$,
and by analyzing its spectral structure we find out all its eigenvalues.
Then, we introduce an abstract model of CTQW with $\mathfrak{h}_L$ as its state space, which is governed by the Schr\"{o}dinger equation with $\Delta_L$ as the Hamiltonian.
We define the time-average probability distribution of the model, obtain an explicit expression of the distribution,
and, especially, we find the distribution admits a symmetry property.
We also justify the model by offering a graph-theoretic interpretation to the operator $\Delta_L$ as well as to the model itself.
Finally, we prove that the model has PST at time $t=\frac{\pi}{2}$. Some other interesting results are also proven of the model.
\vskip 2mm

\noindent\textbf{Keywords.}\ \  Bernoulli functionals; Continuous-time quantum walk; Perfect state transfer; Probability distribution.
\vskip 2mm


\section{Introduction}

Continuous-time quantum walks (CTQWs) are quantum analogues of classical continuous-time random walks in probability theory,
and have found wide application in quantum computation, quantum communication as well as in modeling physical processes \cite{venegas}.

In their seminal paper \cite{farhi-gutmann}, Farhi and Gutmann initiated the study of CTQWs by introducing their model of CTQW on a graph.
Since then, many efforts have been made to understand evolution behavior of CTQWs and related aspects (see, e.g., \cite{venegas} and references therein).
In particular, Childs \cite{childs} has described precise correspondence between CTQWs and discrete-time quantum walks (DTQWs) on arbitrary graphs,
while Konno \cite{konno} has established a weak limit theorem for CTQWs on the integer lattice.

Perfect state transfer (PST) is a concept with great significance in quantum communication, which was originally introduced by Bose \cite{bose}
in the context of quantum spin chains and further studied by Christandl et al. \cite{christandl}.
Recently, there has been much interest in PST in the context of CTQWs.
Cheung and Godsil \cite{cheung-godsil} have addressed PST in CTQWs on cubelike graphs.
Alvir et al. \cite{alvir} have investigated PST in CTQWs relative to graph Laplacians.
Kempton, Lippner and Yau \cite{kempton} have considered graphs with a potential, and proven that CTQWs have PST on such graphs.
There are many other works on PST in the context of CTQWs (see, e.g., \cite{alvir, godsil, kendon,lin} and references therein).

Typically, a CTQW is associated with a graph $G=(V,E)$ in such a way that: the state space of the CTQW is the space $l^2(V)$
of square summable complex-valued functions on the vertex set $V$, and the evolution of the CTQW is governed by a Schr\"{o}dinger equation of the form
\begin{equation}\label{eq}
  \Phi_t = \mathrm{e}^{\mathrm{i}tH}\Phi_0,
\end{equation}
where $H$ is the Hamiltonian that respects the topology of the graph $G$ and $\Phi_t$ denotes the state of the CTQW at time $t$ with $\Phi_0$ being the initial state.
The CTQW is said to have PST from vertex $u\in V$ to vertex $v\in V$ at time $t_0$ if $\Phi_0=\delta_u$ and
$|\langle \delta_v, \Phi_{t_0}\rangle|=1$, where $\delta_u$ and $\delta_v$ denote the basis vectors in $l^2(V)$
associated with $u$ and $v$, respectively, and $\langle\cdot,\cdot\rangle$ means the inner product in $l^2(V)$.
\textbf{Just as pointed out in \cite{kempton}, perfect state transfer is a rather rare phenomenon,
and constructing examples of CTQWs when this occurs can be quite difficult.}

Bernoulli functionals are measurable functions on the Bernoulli space, which can also be viewed as functionals of a Bernoulli process.
Due to the variety in their algebraic and analytical structures, Bernoulli functionals have found wide application
in many problems in mathematical physics. For instance, the space $\mathfrak{h}$ of square integrable complex-valued Bernoulli functionals,
together with the annihilation and creation operators on it, can play an important role in describing the environment
of an open quantum system (see, e.g., \cite{chen-1,chen-2,wangcs-1,wangcs-2}).

In 2016, Wang and Ye \cite{wangcs-3} showed how to construct a model of DTQW (discrete-time quantum walk) on the integer lattice $\mathbb{Z}$
with $\mathfrak{h}$ as the coin space. In 2018, still on the integer lattice $\mathbb{Z}$, Wang et al. \cite{wangcs-4} introduced a model of open DTQW with
$\mathfrak{h}$ as the coin space. As is well known, however, DTQWs are quite different from CTQWs
although there is some correspondence between the two types of quantum walks \cite{childs,venegas}.

In this paper, motivated by the above, we would like to present a Bernoulli functional approach to CTQWs and PST.
More specifically, we would like to construct an abstract model of CTQW in the framework of $\mathfrak{h}$ and
investigate its PST property as well as its evolution behavior. Our main work is as follows.

First, we construct on a given subspace $\mathfrak{h}_L\subset \mathfrak{h}$ a self-adjoint operator $\Delta_L$, which we call the Laplacain on $\mathfrak{h}_L$,
via the canonical unitary involutions on $\mathfrak{h}$, and by analyzing its spectral structure we find out all eigenvalues of $\Delta_L$.
Then, we introduce an abstract model of CTQW with $\mathfrak{h}_L$ as its state space, which is governed by the Schr\"{o}dinger equation with $\Delta_L$ as the Hamiltonian,
and define its time-average probability distribution $\overline{P}$.
We obtain an explicit expression of $\overline{P}$, and, especially, we find $\overline{P}$ admits a symmetry property.
We justify the model by offering a graph-theoretic interpretation to the operator $\Delta_L$ as well as to the model itself.
Finally, we prove that the model has PST at time $t=\frac{\pi}{2}$. Some other interesting results are also proven.

The paper is organized as follows. In Section~\ref{sec-2}, we briefly recall the Bernoulli space, Bernoulli functionals
as well as some operators acting on these functionals. Section~\ref{sec-3} is our main work.
Here, in Subsection~\ref{subsec-3-1} we introduce our Laplacian $\Delta_L$ and analyze its spectral structure;
Subsection~\ref{subsec-3-2} describes our model of CTQW and examines properties of its time-average probability distribution;
In Subsection~\ref{subsec-3-3}, we justify our model by offering a graph-theoretic interpretation to the operator $\Delta_L$.
Finally in Subsection~\ref{subsec-3-4}, we discuss PST in our model.

\textbf{Frequently used notation.}\ \
Throughout this paper, $\mathbb{N}$ denotes the set of all nonnegative integers and $\Gamma$ its finite power set, namely
\begin{equation}\label{eq-1-1}
  \Gamma = \big\{\sigma \mid \sigma\subset \mathbb{N},\, \#(\sigma)<\infty\big\},
\end{equation}
where $\#(\sigma)$ means the cardinality of $\sigma$ as a set. For $\sigma\in \Gamma$ and $k\in \mathbb{N}$, $\sigma \setminus k \equiv \sigma\setminus \{k\}$,
and similarly $\sigma \cup k \equiv \sigma\cup \{k\}$.
If $L$ is a nonnegative integer, then $\Gamma_L$ stands for the power set of $\mathbb{N}_L$,
namely
\begin{equation}\label{eq-1-2}
\Gamma_L=\{\sigma \mid \sigma \subset \mathbb{N}_L\},
\end{equation}
where $\mathbb{N}_L\equiv\{0,1,\cdots, L\}$, the truncation of $\mathbb{N}$ by $L$.
Unless otherwise specified, letters like $j$, $k$ and $n$ are used to indicate integers.

\section{Bernoulli Functionals}\label{sec-2}

In this section, we briefly recall some necessary notions, notation as well as facts about Bernoulli functionals and operators acting on them.

Consider the set $\varOmega$ of all mappings $\omega\colon \mathbb{N}\rightarrow \{-1,1\}$, which is uncountable.
For each $n\geq 0$, let $\zeta_n\colon \varOmega\rightarrow \{-1,1\}$ be the canonical projection given by
\begin{equation}\label{eq-2-1}
  \zeta_n(\omega) = \omega(n),\quad \omega \in \varOmega.
\end{equation}
Then, one has a $\sigma$-algebra $\mathscr{F}=\sigma(\zeta_n;n\geq 0)$ over $\varOmega$, which is the one generated by the family of projections $(\zeta_n)_{n\geq 0}$.
Let $(\theta_n)_{n\geq 0}$ be a given sequence of real numbers with the property that $0 < \theta_n < 1$ for all $n\geq 0$.
By the well-known Kolmogorov's theorem \cite{cohn}, there exists a unique probability measure $P$  on $\mathscr{F}$ such that
\begin{equation}\label{eq-2-2}
P\circ(\zeta_{n_1}, \zeta_{n_2}, \cdots, \zeta_{n_k})^{-1}\big\{(\epsilon_1, \epsilon_2, \cdots, \epsilon_k)\big\}
=\prod_{j=1}^k \theta_{n_j}^{\frac{1+\epsilon_j}{2}}(1-\theta_{n_j})^{\frac{1-\epsilon_j}{2}},
\end{equation}
where $k$ is any positive integer, $\{n_j \mid 1\leq j \leq k\}\subset \mathbb{N}$ with $n_i\neq n_j$ (if $i\neq j$) and $\epsilon_j\in \{-1,1\}$, $1\le j \le k$.
Thus, one comes to a probability measure space\,
$(\varOmega, \mathscr{F}, P)$,
which is called the Bernoulli space.
Measurable functions on $(\varOmega, \mathscr{F}, P)$ are usually known as Bernoulli functionals.
In particular, square integrable functions on $(\varOmega, \mathscr{F}, P)$ are referred to as square integrable Bernoulli functionals

Denote by $\mathfrak{h}\equiv L^2(\varOmega, \mathscr{F}, P)$ the space of complex-valued square integrable Bernoulli functionals
with $\langle\cdot,\cdot\rangle$ being the usual inner product given by
\begin{equation}\label{eq-2-3}
  \langle\xi,\eta\rangle = \int_{\varOmega}\overline{\xi(\omega)}\eta(\omega)dP(\omega),\quad \xi,\, \eta\in\mathfrak{h},
\end{equation}
where $\overline{a}$ means the complex conjugate of a complex number $a$. The norm induced by $\langle\cdot,\cdot\rangle$ is written as $\|\cdot\|$.

Let $Z=(Z_n)_{n\geq 0}$ be the standardized sequence of the canonical projection sequence $(\zeta_n)_{n\geq 0}$, namely
\begin{equation}\label{eq-2-4}
   Z_n = \frac{\zeta_n + 1-2\theta_n }{2\sqrt{\theta_n(1-\theta_n)}},\quad n\geq 0.
\end{equation}
Then, $\mathfrak{h}$ has an orthonormal basis (ONB) of the form $\{Z_{\sigma} \mid \sigma \in \Gamma\}$, where
$Z_{\emptyset}=1$ and
\begin{equation}\label{eq-2-5}
Z_{\sigma} = \prod_{j\in \sigma}Z_j,\quad \sigma \in \Gamma,\, \sigma \neq \emptyset,
\end{equation}
where, as indicated above, $\Gamma$ is the finite power set of $\mathbb{N}$. Usually, $\{Z_{\sigma} \mid \sigma \in \Gamma\}$ is known
as the canonical ONB for $\mathfrak{h}$.

\begin{lemma}\label{lem-2-1}
Let $k\geq 0$. Then there exits a bounded operator $\Xi_k$ on $\mathfrak{h}$ such that
\begin{equation}\label{eq-2-6}
  \Xi_kZ_{\sigma} = \mathbf{1}_{\sigma}(k)Z_{\sigma\setminus k} + (1-\mathbf{1}_{\sigma}(k))Z_{\sigma\cup k},\quad \sigma \in \Gamma.
\end{equation}
\end{lemma}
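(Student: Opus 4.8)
The plan is to define $\Xi_k$ first on the linear span of the canonical ONB and then extend it by continuity. The starting observation is that the prescription \eqref{eq-2-6} simply sends the basis vector $Z_\sigma$ to $Z_{\sigma\triangle k}$, where $\sigma\triangle k$ denotes the symmetric difference of $\sigma$ with $\{k\}$: indeed, if $k\in\sigma$ then $\mathbf{1}_\sigma(k)=1$ and the right-hand side of \eqref{eq-2-6} reduces to $Z_{\sigma\setminus k}$, while if $k\notin\sigma$ then $\mathbf{1}_\sigma(k)=0$ and it reduces to $Z_{\sigma\cup k}$; in either case the output is $Z_{\sigma\triangle k}$, which again belongs to the canonical ONB since $\sigma\triangle k\in\Gamma$.

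Next I would note that the map $\sigma\mapsto\sigma\triangle k$ is a bijection of $\Gamma$ onto itself --- in fact an involution, because $(\sigma\triangle k)\triangle k=\sigma$. Hence the assignment $Z_\sigma\mapsto Z_{\sigma\triangle k}$ permutes the canonical ONB $\{Z_\sigma\mid\sigma\in\Gamma\}$. I then define $\Xi_k$ on the (dense) subspace of finite linear combinations of basis vectors by extending \eqref{eq-2-6} linearly. Since $\Xi_k$ carries an ONB bijectively onto an ONB, it preserves the inner product on this subspace: for $\xi=\sum_{\sigma}c_\sigma Z_\sigma$ a finite sum one has $\|\Xi_k\xi\|^2=\sum_\sigma|c_\sigma|^2=\|\xi\|^2$. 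In particular $\Xi_k$ is bounded with operator norm $1$ on this dense subspace, so it admits a unique bounded extension to all of $\mathfrak{h}$, which we still denote by $\Xi_k$; by continuity the extension still obeys \eqref{eq-2-6} on every basis vector, which is exactly the assertion of the lemma. (In fact the extension is unitary and satisfies $\Xi_k^2=I$, though only boundedness is claimed here.)

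I do not expect any serious obstacle in this argument: the only point requiring a little care is recognizing that \eqref{eq-2-6} is nothing more than the basis permutation induced by the involution $\sigma\mapsto\sigma\triangle k$, after which the existence and boundedness of $\Xi_k$ follow from the standard fact that a bijection between orthonormal bases extends uniquely to a unitary operator on the Hilbert space. I would also record, for use in the sequel, the companion facts that each $\Xi_k$ is self-adjoint and that $\Xi_j\Xi_k=\Xi_k\Xi_j$ for all $j,k\ge 0$, since these commutation and involution properties are what make the later construction of the Laplacian $\Delta_L$ possible.
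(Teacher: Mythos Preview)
Your argument is correct: recognizing \eqref{eq-2-6} as the basis permutation $Z_\sigma\mapsto Z_{\sigma\triangle k}$ and invoking the standard fact that a bijection of an ONB extends uniquely to a unitary is a clean, self-contained way to produce the bounded operator $\Xi_k$. The paper, however, takes a different and shorter route: it simply sets $\Xi_k=\partial_k^*+\partial_k$, where $\partial_k$ and $\partial_k^*$ are the annihilation and creation operators on $\mathfrak{h}$ already available from the literature, and then reads off \eqref{eq-2-6} from the known action of those operators on the canonical basis. Your approach has the virtue of being entirely elementary and independent of the Bernoulli-noise machinery, whereas the paper's approach buys an immediate structural identification of $\Xi_k$ in terms of the standard ladder operators, which makes the proofs of self-adjointness and $\Xi_k^2=I$ in the next lemma drop out of the CAR relations rather than from a separate basis computation.
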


\begin{proof}
Let $\partial_k$ be the annihilation operator on $\mathfrak{h}$ and $\partial_k^*$ its adjoint (see Section 2 of \cite{wangcs-3} for details).
Then, by putting $\Xi_k = \partial_k^* + \partial_k$, we find that $\Xi_k$ satisfies (\ref{eq-2-6}).
\end{proof}

\begin{lemma}\label{lem-2-2}
For all $k\geq 0$, $\Xi_k$ is a unitary involution, namely\, $\Xi_k^*=\Xi_k$  and\, $\Xi_k^2=I$.
\end{lemma}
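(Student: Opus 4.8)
The plan is to exploit the fact, visible from \eqref{eq-2-6}, that $\Xi_k$ merely permutes the canonical ONB $\{Z_{\sigma} \mid \sigma\in\Gamma\}$. Precisely, I would introduce the map $r_k\colon\Gamma\to\Gamma$ defined by $r_k(\sigma)=\sigma\setminus k$ when $k\in\sigma$ and $r_k(\sigma)=\sigma\cup k$ when $k\notin\sigma$ (in other words, $r_k(\sigma)$ is the symmetric difference of $\sigma$ and $\{k\}$). Then \eqref{eq-2-6} reads simply $\Xi_kZ_{\sigma}=Z_{r_k(\sigma)}$ for every $\sigma\in\Gamma$. Since $\Xi_k$ is bounded and the finite linear span of $\{Z_{\sigma}\}$ is dense in $\mathfrak{h}$, it suffices to verify both asserted identities $\Xi_k^2=I$ and $\Xi_k^*=\Xi_k$ on this ONB.

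First I would check that $r_k$ is an involution on $\Gamma$, i.e. $r_k\circ r_k=\mathrm{id}_{\Gamma}$. This is a short two-case verification: if $k\in\sigma$ then $k\notin\sigma\setminus k$, so $r_k(r_k(\sigma))=(\sigma\setminus k)\cup k=\sigma$; if $k\notin\sigma$ then $k\in\sigma\cup k$, so $r_k(r_k(\sigma))=(\sigma\cup k)\setminus k=\sigma$. Consequently $\Xi_k^2Z_{\sigma}=\Xi_kZ_{r_k(\sigma)}=Z_{r_k(r_k(\sigma))}=Z_{\sigma}$ for all $\sigma\in\Gamma$, and therefore $\Xi_k^2=I$.

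For self-adjointness I would compute matrix elements in the canonical ONB. Using orthonormality of $\{Z_{\sigma}\}$, for arbitrary $\sigma,\tau\in\Gamma$ one has $\langle\Xi_kZ_{\sigma},Z_{\tau}\rangle=\langle Z_{r_k(\sigma)},Z_{\tau}\rangle=\delta_{r_k(\sigma),\tau}$ and $\langle Z_{\sigma},\Xi_kZ_{\tau}\rangle=\langle Z_{\sigma},Z_{r_k(\tau)}\rangle=\delta_{\sigma,r_k(\tau)}$. Because $r_k$ is an involution, the relation $r_k(\sigma)=\tau$ holds if and only if $\sigma=r_k(\tau)$, so these two Kronecker deltas coincide; hence $\langle\Xi_kZ_{\sigma},Z_{\tau}\rangle=\langle Z_{\sigma},\Xi_kZ_{\tau}\rangle$ for all $\sigma,\tau\in\Gamma$, which by density gives $\Xi_k^*=\Xi_k$.

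There is essentially no serious obstacle here; the only point requiring a little care is the bookkeeping with the indicator $\mathbf{1}_{\sigma}(k)$ in \eqref{eq-2-6}, namely recognizing that it encodes precisely the operation ``toggle the element $k$'' on $\sigma$, together with the routine remark that an identity between bounded operators may be checked on a total set. (Alternatively, one could invoke the representation $\Xi_k=\partial_k^*+\partial_k$ used in the proof of Lemma~\ref{lem-2-1}: then $\Xi_k^*=\Xi_k$ is immediate from $(\partial_k^*+\partial_k)^*=\partial_k+\partial_k^*$, and $\Xi_k^2=I$ follows from the canonical anticommutation relations $\partial_k^2=(\partial_k^*)^2=0$ and $\partial_k\partial_k^*+\partial_k^*\partial_k=I$; but the basis computation above is self-contained and does not depend on those relations.)
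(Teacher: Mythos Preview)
Your proof is correct but proceeds by a genuinely different route than the paper. The paper's argument is precisely the alternative you sketch parenthetically at the end: it uses the decomposition $\Xi_k=\partial_k^*+\partial_k$ from the proof of Lemma~\ref{lem-2-1}, so that $\Xi_k^*=\Xi_k$ is immediate, and then expands $\Xi_k^2=(\partial_k^*+\partial_k)^2$ and invokes the nilpotency ${\partial_k^*}^2=\partial_k^2=0$ together with the CAR $\partial_k\partial_k^*+\partial_k^*\partial_k=I$ (quoted from \cite{wangcs-3}) to obtain $\Xi_k^2=I$. Your main argument instead works directly from \eqref{eq-2-6}, recognizing $\Xi_k$ as the permutation operator on the canonical ONB induced by the set-theoretic involution $r_k(\sigma)=\sigma\bigtriangleup\{k\}$, and verifies both identities by elementary basis and matrix-element computations. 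Your route is more self-contained, since it does not rely on algebraic relations for $\partial_k,\partial_k^*$ imported from an external reference, whereas the paper's route is terser once those relations are granted.
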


\begin{proof}
From the proof of Lemma~\ref{lem-2-1}, we know that $\Xi_k = \partial_k^* + \partial_k$, which implies that $\Xi_k^*=\Xi_k$.
A straightforward calculation gives
\begin{equation*}
  \Xi_k^2 = (\partial_k^* + \partial_k)^2={\partial_k^*}^2 + \partial_k^*\partial_k + \partial_k\partial_k^* + \partial_k^2,
\end{equation*}
which, together with ${\partial_k^*}^2=\partial_k^2=0$ as well as the CAR in equal time (see Section 2 of \cite{wangcs-3} for details), implies $\Xi_k^2 =I$.
\end{proof}

The operator $\Xi_k$ is known as the canonical unitary involution on $\mathfrak{h}$ associated with $k$.
One typical property of the operator family $\{\Xi_k \mid k\geq 0\}$ is its commutativity, namely the following commutative relations hold true
\begin{equation}\label{eq-2-7}
\Xi_j\Xi_k = \Xi_k\Xi_j,\quad \forall\, j,\, k\geq 0.
\end{equation}
In view of the commutativity, one immediately finds the following products well-defined for all $\sigma\in \Gamma$
\begin{equation}\label{eq-2-8}
  \Xi_{\sigma}=\prod_{k\in \sigma}\Xi_k,
\end{equation}
where, by convention, $\Xi_{\emptyset} =I$. Using Lemma~\ref{lem-2-2} and the commutativity of $\{\Xi_k \mid k\geq 0\}$,
one can verify that $\Xi_{\sigma}$ remains a unitary involution for each $\sigma\in \Gamma$.

For a nonnegative integer $L\geq 0$, let $\mathfrak{h}_L= \mathrm{Span}\{Z_{\sigma} \mid \sigma \in \Gamma_L\}$,
the subspace of $\mathfrak{h}$ spanned by $\{Z_{\sigma} \mid \sigma \in \Gamma_L\}$.
Then, by inheriting the inner product $\langle\cdot,\cdot\rangle$ in $\mathfrak{h}$,
$\mathfrak{h}_L$ itself is a complex Hilbert space of dimension $2^{L+1}$.
The next lemma shows that $\Xi_k$ can be viewed as an operator on $\mathfrak{h}_L$ whenever $k\in \mathbb{N}_L$.

\begin{lemma}\label{lem-2-3}
Let $L\geq 0$ be a nonnegative integer. Then, for each $\sigma\in \Gamma_L$, $\Xi_{\sigma}$ leaves $\mathfrak{h}_L$ invariant.
In particular, for each $k\in \mathbb{N}_L$, $\Xi_k$ leaves $\mathfrak{h}_L$ invariant.
\end{lemma}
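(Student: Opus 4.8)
The plan is to show that $\Xi_\sigma$ maps each basis vector $Z_\tau$ with $\tau \in \Gamma_L$ back into $\mathfrak{h}_L = \mathrm{Span}\{Z_\rho \mid \rho \in \Gamma_L\}$, and then invoke linearity together with the fact that $\mathfrak{h}_L$ is spanned by these vectors. Since $\Xi_\sigma = \prod_{k \in \sigma}\Xi_k$ and the $\Xi_k$ commute, it suffices to treat a single factor $\Xi_k$ with $k \in \mathbb{N}_L$ and then iterate: if each $\Xi_k$ with $k \in \mathbb{N}_L$ leaves $\mathfrak{h}_L$ invariant, then so does any finite product of them indexed by a subset $\sigma$ of $\mathbb{N}_L$, hence $\Xi_\sigma$ for $\sigma \in \Gamma_L$.

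For the single-factor statement, I would argue directly from formula~(\ref{eq-2-6}). Fix $k \in \mathbb{N}_L$ and $\tau \in \Gamma_L$, so $\tau \subset \mathbb{N}_L$. Then
\begin{equation*}
  \Xi_k Z_\tau = \mathbf{1}_\tau(k) Z_{\tau \setminus k} + \big(1 - \mathbf{1}_\tau(k)\big) Z_{\tau \cup k}.
\end{equation*}
Exactly one of the two terms survives: if $k \in \tau$ the result is $Z_{\tau \setminus k}$, and if $k \notin \tau$ the result is $Z_{\tau \cup k}$. In either case the resulting index set is $\tau \setminus k$ or $\tau \cup k$, and since $\tau \subset \mathbb{N}_L$ and $k \in \mathbb{N}_L$, both $\tau \setminus k$ and $\tau \cup k$ are again subsets of $\mathbb{N}_L$, i.e.\ elements of $\Gamma_L$. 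Hence $\Xi_k Z_\tau \in \mathfrak{h}_L$ for every $\tau \in \Gamma_L$, and because $\{Z_\tau \mid \tau \in \Gamma_L\}$ spans $\mathfrak{h}_L$, linearity of $\Xi_k$ gives $\Xi_k(\mathfrak{h}_L) \subset \mathfrak{h}_L$.

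Then I would complete the proof by induction on $\#(\sigma)$. The base case $\sigma = \emptyset$ is immediate since $\Xi_\emptyset = I$. For the inductive step, write $\sigma = \sigma' \cup k$ with $k \notin \sigma'$ and $\sigma' \in \Gamma_L$; by the commutativity~(\ref{eq-2-7}) and associativity of the product in~(\ref{eq-2-8}) we have $\Xi_\sigma = \Xi_k \Xi_{\sigma'}$, and since $\Xi_{\sigma'}$ leaves $\mathfrak{h}_L$ invariant by the induction hypothesis and $\Xi_k$ does so by the previous paragraph, the composition $\Xi_k \Xi_{\sigma'}$ leaves $\mathfrak{h}_L$ invariant. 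The final sentence of the lemma, about $\Xi_k$ for $k \in \mathbb{N}_L$, is just the special case $\sigma = \{k\}$. I do not anticipate a genuine obstacle here; the only point requiring a moment's care is the bookkeeping that $\tau \setminus k$ and $\tau \cup k$ stay inside $\mathbb{N}_L$, which is immediate once one observes $k \le L$, and making sure the iteration over a product of involutions is phrased cleanly via the commutativity already recorded in~(\ref{eq-2-7}).
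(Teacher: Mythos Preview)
Your proof is correct and follows essentially the same approach as the paper: both reduce to showing $\Xi_k Z_\tau \in \mathfrak{h}_L$ for $k \in \mathbb{N}_L$ and $\tau \in \Gamma_L$ via the observation that $\tau \setminus k$ and $\tau \cup k$ remain in $\Gamma_L$. The paper simply states that this single-factor case suffices, leaving the passage to $\Xi_\sigma$ implicit, whereas you spell out the induction on $\#(\sigma)$ explicitly; this is a matter of presentation rather than a different argument.
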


\begin{proof}
It suffices to show that $\Xi_kZ_{\tau} \in \mathfrak{h}_L$ for all $k\in \mathbb{N}_L$ and all $\tau\in \Gamma_L$.
In fact, for all $k\in \mathbb{N}_L$ and all $\tau\in \Gamma_L$,
both $\tau\setminus k$ and $\tau\cup k$ remain in $\Gamma_L$, which together with
\begin{equation*}
  \Xi_kZ_{\tau} = \mathbf{1}_{\tau}(k)Z_{\tau\setminus k} + (1-\mathbf{1}_{\tau}(k))Z_{\tau\cup k}
\end{equation*}
implies that $\Xi_kZ_{\tau}\in \mathfrak{h}_L$. This completes the proof.
\end{proof}

\begin{remark}\label{rem-2-1}
Let $L\geq 0$ be a nonnegative integer and $k\in \mathbb{N}_L$. Then, viewed as an operator on $\mathfrak{h}_L$,
$\Xi_k$ remains a unitary involution on $\mathfrak{h}_L$. This is an immediate consequence of Lemma~\ref{lem-2-2} and Lemma~\ref{lem-2-3}.
\end{remark}

\section{Main work}\label{sec-3}

This section presents our main work. In what follows, we always assume that $L\geq 0$ is fixed nonnegative integer.  We continue to write
\begin{equation}\label{eq-3-1}
\mathfrak{h}_L = \mathrm{Span}\{Z_{\sigma} \mid \sigma\in \Gamma_L\}.
\end{equation}
Note that $\mathfrak{h}_L$, together with the inner product $\langle\cdot,\cdot\rangle$ in $\mathfrak{h}$, is a complex Hilbert space of
dimension $2^{L+1}$. Clearly, the collection $\{Z_{\sigma} \mid \sigma\in \Gamma_L\}$ forms an ONB for $\mathfrak{h}_L$.
In the following, we refer to $\{Z_{\sigma} \mid \sigma\in \Gamma_L\}$ as the canonical ONB for $\mathfrak{h}_L$.

\subsection{Laplacian on $\mathfrak{h}_L$}\label{subsec-3-1}

In this subsection, we mainly introduce our Laplacian on $\mathfrak{h}_L$ and examine its spectral structure.

For each $k\in \mathbb{N}_L$, $\Xi_k$ can be viewed as an operator on $\mathfrak{h}_L$ since it leaves $\mathfrak{h}_L$ invariant.
We also use $I$ as the identity operator on $\mathfrak{h}_L$.
Consider the following operator on $\mathfrak{h}_L$
\begin{equation}\label{eq-3-2}
  \Delta_L = \sum_{k=0}^L (I-\Xi_k).
\end{equation}
According to Remark~\ref{rem-2-1}, $\Delta_L$ is a self-adjoint (bounded) operator on $\mathfrak{h}_L$.
Additionally, due to their unitary property and commutativity, $\{\Xi_k \mid k\in \mathbb{N}_L\}$ can naturally serve as shift operators on $\mathfrak{h}_L$.
In view of this, we call $\Delta_L$ the\textbf{ Laplacian} on $\mathfrak{h}_L$.

Now let us get an insight into the spectral structure of the operator $\Delta_L$. To this end, we first make some preparations.

\begin{proposition}\label{prop-3-1}
Define\, $\widehat{\Xi}^{(L)}_{\sigma}=\prod_{k=0}^L\big(I+\mathcal{E}_{\sigma}(k)\Xi_k\big)$ for $\sigma\in \Gamma_L$,
where $\mathcal{E}_{\sigma}(k)=2\mathbf{1}_{\sigma}(k)-1$. Then it holds true that
\begin{equation}\label{eq-3-3}
  \widehat{\Xi}^{(L)}_{\sigma}\widehat{\Xi}^{(L)}_{\tau}=
    \left\{
     \begin{array}{ll}
       2^{L+1}\widehat{\Xi}^{(L)}_{\sigma}, & \hbox{$\sigma$, $\tau\in \Gamma_L$,\, $\sigma=\tau$;} \\
       0, & \hbox{$\sigma$, $\tau\in \Gamma_L$,\, $\sigma\neq\tau$,}
     \end{array}
   \right.
\end{equation}
where $0$ means the null operator.
\end{proposition}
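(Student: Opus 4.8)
The plan is to reduce everything to two elementary facts about each canonical unitary involution $\Xi_k$, viewed as an operator on $\mathfrak{h}_L$: that $\Xi_k^2 = I$ (Remark~\ref{rem-2-1}), and that the family $\{\Xi_k \mid k \in \mathbb{N}_L\}$ is commutative (relation (\ref{eq-2-7})). The first fact yields, for any sign $\epsilon \in \{-1,1\}$, the two identities $(I + \epsilon\Xi_k)^2 = I + 2\epsilon\Xi_k + \epsilon^2\Xi_k^2 = 2(I + \epsilon\Xi_k)$ and $(I + \Xi_k)(I - \Xi_k) = I - \Xi_k^2 = 0$. The second fact implies that all the factors $I + \mathcal{E}_\sigma(k)\Xi_k$ (as $k$ ranges over $\mathbb{N}_L$ and $\sigma$ over $\Gamma_L$) are polynomials in mutually commuting operators, hence commute with one another; in particular the product defining $\widehat{\Xi}^{(L)}_\sigma$ is unambiguous.

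First I would note that, since $\mathcal{E}_\sigma(k) = 2\mathbf{1}_\sigma(k) - 1 \in \{-1,1\}$, the factor $I + \mathcal{E}_\sigma(k)\Xi_k$ equals $I + \Xi_k$ if $k \in \sigma$ and $I - \Xi_k$ if $k \notin \sigma$. Using the commutativity just recalled, I would rearrange the product $\widehat{\Xi}^{(L)}_\sigma \widehat{\Xi}^{(L)}_\tau$ so that the two factors carrying the same index $k$ become adjacent, obtaining
\[
  \widehat{\Xi}^{(L)}_\sigma \widehat{\Xi}^{(L)}_\tau = \prod_{k=0}^L \big(I + \mathcal{E}_\sigma(k)\Xi_k\big)\big(I + \mathcal{E}_\tau(k)\Xi_k\big).
\]
Then I would evaluate the $k$-th bracket case by case: if $\mathcal{E}_\sigma(k) = \mathcal{E}_\tau(k)$ it equals $(I + \mathcal{E}_\sigma(k)\Xi_k)^2 = 2(I + \mathcal{E}_\sigma(k)\Xi_k)$, while if $\mathcal{E}_\sigma(k) \neq \mathcal{E}_\tau(k)$ one sign is $+1$ and the other $-1$, so it equals $(I + \Xi_k)(I - \Xi_k) = 0$.

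To finish, when $\sigma = \tau$ every bracket is of the first type, so pulling the $L+1$ scalar factors of $2$ out front gives $\widehat{\Xi}^{(L)}_\sigma \widehat{\Xi}^{(L)}_\sigma = 2^{L+1}\prod_{k=0}^L(I + \mathcal{E}_\sigma(k)\Xi_k) = 2^{L+1}\widehat{\Xi}^{(L)}_\sigma$. When $\sigma \neq \tau$, I would invoke the fact that $\Gamma_L$ is the full power set of $\mathbb{N}_L$, so $\sigma$ and $\tau$ must differ at some $k_0 \in \mathbb{N}_L$, i.e. $\mathcal{E}_\sigma(k_0) \neq \mathcal{E}_\tau(k_0)$; the $k_0$-th bracket then vanishes, and since it commutes with all the remaining brackets the whole product collapses to the null operator. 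There is no real obstacle in this argument; the only steps deserving explicit care are the justification of the regrouping from commutativity (best done by a short induction on the number of factors, or cited from the analogous discussion following (\ref{eq-2-8})) and the elementary observation that distinct subsets of $\mathbb{N}_L$ are always distinguished by some element of $\mathbb{N}_L$.
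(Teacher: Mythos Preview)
Your proposal is correct and follows essentially the same approach as the paper: both arguments use the commutativity of the $\Xi_k$ to regroup the product as $\prod_{k=0}^L(I+\mathcal{E}_\sigma(k)\Xi_k)(I+\mathcal{E}_\tau(k)\Xi_k)$, then evaluate each $k$-factor via $\Xi_k^2=I$, obtaining $2(I+\mathcal{E}_\sigma(k)\Xi_k)$ when the signs agree and $0$ when they differ. The paper is slightly terser in that it singles out a single index $j\in(\sigma\setminus\tau)\cup(\tau\setminus\sigma)$ in the $\sigma\neq\tau$ case rather than doing a full case split, but this is a cosmetic difference.
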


\begin{proof}
By using the commutativity of the operator system $\{\Xi_k \mid k\in \mathbb{N}_L\}$ on $\mathfrak{h}_L$ as well as the fact that
each $\Xi_k$ is a unitary involution
on $\mathfrak{h}_L$, we have
\begin{equation*}
  \big(\widehat{\Xi}^{(L)}_{\sigma}\big)^2
= \prod_{k=0}^L\big(I+\mathcal{E}_{\sigma}(k)\Xi_k\big)^2
= \prod_{k=0}^L2\big(I+\mathcal{E}_{\sigma}(k)\Xi_k\big)
= 2^{L+1}\widehat{\Xi}^{(L)}_{\sigma}.
\end{equation*}
Next, we consider the case of $\sigma\neq\tau$, which implies that $(\sigma\setminus \tau) \cup (\tau\setminus \sigma)\neq \emptyset$.
Take $j\in (\sigma\setminus \tau) \cup (\tau\setminus \sigma)$. Then
$\big(I+\mathcal{E}_{\sigma}(j)\Xi_k\big)\big(I+\mathcal{E}_{\tau}(j)\Xi_k\big)=0$, which implies that
\begin{equation*}
\begin{split}
 \widehat{\Xi}^{(L)}_{\sigma}\widehat{\Xi}^{(L)}_{\tau}
& = \prod_{k=0}^L\big(I+\mathcal{E}_{\sigma}(k)\Xi_k\big)\big(I+\mathcal{E}_{\tau}(k)\Xi_k\big)\\
& = \big(I+\mathcal{E}_{\sigma}(j)\Xi_k\big)\big(I+\mathcal{E}_{\tau}(j)\Xi_k\big)
    \prod_{k=0, k\neq j}^L\big(I+\mathcal{E}_{\sigma}(k)\Xi_k\big)\big(I+\mathcal{E}_{\tau}(k)\Xi_k\big)\\
& = 0.
\end{split}
\end{equation*}
Therefore (\ref{eq-3-3}) holds.
\end{proof}

\begin{theorem}\label{thr-3-2}
As a complex Hilbert space, $\mathfrak{h}_L$ has an ONB of the form $\big\{\widehat{Z}^{(L)}_{\sigma} \bigm| \sigma\in \Gamma_L\big\}$, where
\begin{equation}\label{eq-3-4}
\widehat{Z}^{(L)}_{\sigma}
= \frac{1}{\sqrt{2^{L+1}}}\sum_{\gamma \in \Gamma_L} (-1)^{\#(\gamma\setminus \sigma)}Z_{\gamma},
\end{equation}
where $\sum_{\gamma \in \Gamma_L}$ means to sum for all $\gamma\in \Gamma_L$.
\end{theorem}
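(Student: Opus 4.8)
The plan is to recognize each $\widehat{Z}^{(L)}_{\sigma}$ as the normalized image of the vacuum vector $Z_{\emptyset}$ under the operator $\widehat{\Xi}^{(L)}_{\sigma}$ of Proposition~\ref{prop-3-1}, and then to read off both orthogonality and normalization directly from the multiplication rule (\ref{eq-3-3}) together with the self-adjointness of $\widehat{\Xi}^{(L)}_{\sigma}$; completeness will then come for free, from a dimension count.

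First I would expand the product defining $\widehat{\Xi}^{(L)}_{\sigma}$. Using the commutativity of $\{\Xi_k \mid k\in\mathbb{N}_L\}$ on $\mathfrak{h}_L$,
\[
\widehat{\Xi}^{(L)}_{\sigma} = \prod_{k=0}^L\big(I+\mathcal{E}_{\sigma}(k)\Xi_k\big) = \sum_{\gamma\in\Gamma_L}\Big(\prod_{k\in\gamma}\mathcal{E}_{\sigma}(k)\Big)\Xi_{\gamma},
\]
where $\Xi_{\gamma}=\prod_{k\in\gamma}\Xi_k$ as in (\ref{eq-2-8}). Applying this to $Z_{\emptyset}$ and using $\Xi_{\gamma}Z_{\emptyset}=Z_{\gamma}$ — which follows from (\ref{eq-2-6}) by toggling the elements of $\gamma$ one at a time — gives $\widehat{\Xi}^{(L)}_{\sigma}Z_{\emptyset}=\sum_{\gamma\in\Gamma_L}\big(\prod_{k\in\gamma}\mathcal{E}_{\sigma}(k)\big)Z_{\gamma}$. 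The sign bookkeeping is then to observe that $\prod_{k\in\gamma}\mathcal{E}_{\sigma}(k)=\prod_{k\in\gamma}\big(2\mathbf{1}_{\sigma}(k)-1\big)=(-1)^{\#(\gamma\setminus\sigma)}$, since a factor in this product equals $-1$ precisely for those $k\in\gamma$ with $k\notin\sigma$. Comparing with (\ref{eq-3-4}), this yields the key identity $\widehat{Z}^{(L)}_{\sigma}=\frac{1}{\sqrt{2^{L+1}}}\,\widehat{\Xi}^{(L)}_{\sigma}Z_{\emptyset}$ for every $\sigma\in\Gamma_L$.

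Next, since each $\Xi_k$ is self-adjoint on $\mathfrak{h}_L$ (Remark~\ref{rem-2-1}) and the factors $I+\mathcal{E}_{\sigma}(k)\Xi_k$ mutually commute, $\widehat{\Xi}^{(L)}_{\sigma}$ is self-adjoint. Hence, for $\sigma,\tau\in\Gamma_L$,
\[
\big\langle \widehat{Z}^{(L)}_{\sigma},\widehat{Z}^{(L)}_{\tau}\big\rangle = \frac{1}{2^{L+1}}\big\langle \widehat{\Xi}^{(L)}_{\sigma}Z_{\emptyset},\widehat{\Xi}^{(L)}_{\tau}Z_{\emptyset}\big\rangle = \frac{1}{2^{L+1}}\big\langle Z_{\emptyset},\widehat{\Xi}^{(L)}_{\sigma}\widehat{\Xi}^{(L)}_{\tau}Z_{\emptyset}\big\rangle .
\]
By (\ref{eq-3-3}) this is $0$ when $\sigma\neq\tau$, and when $\sigma=\tau$ it equals $\langle Z_{\emptyset},\widehat{\Xi}^{(L)}_{\sigma}Z_{\emptyset}\rangle=\big\langle Z_{\emptyset},\sum_{\gamma\in\Gamma_L}(-1)^{\#(\gamma\setminus\sigma)}Z_{\gamma}\big\rangle=1$, using the orthonormality of $\{Z_{\gamma}\mid\gamma\in\Gamma_L\}$ and $\#(\emptyset\setminus\sigma)=0$. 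Thus $\big\{\widehat{Z}^{(L)}_{\sigma}\bigm|\sigma\in\Gamma_L\big\}$ is an orthonormal system of $\#(\Gamma_L)=2^{L+1}$ vectors in $\mathfrak{h}_L$; since $\dim\mathfrak{h}_L=2^{L+1}$, it is automatically an ONB, and the proof is complete.

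The one genuinely substantive point is the identification $\widehat{Z}^{(L)}_{\sigma}=\frac{1}{\sqrt{2^{L+1}}}\widehat{\Xi}^{(L)}_{\sigma}Z_{\emptyset}$, i.e. seeing that the combinatorial sum in (\ref{eq-3-4}) is exactly $\widehat{\Xi}^{(L)}_{\sigma}$ applied to the vacuum with the correct sign $(-1)^{\#(\gamma\setminus\sigma)}$; once that is in place everything reduces mechanically to Proposition~\ref{prop-3-1}. As an alternative route that bypasses Proposition~\ref{prop-3-1}, one can compute $\langle\widehat{Z}^{(L)}_{\sigma},\widehat{Z}^{(L)}_{\tau}\rangle$ straight from (\ref{eq-3-4}) as $2^{-(L+1)}\sum_{\gamma\in\Gamma_L}(-1)^{\#(\gamma\setminus\sigma)+\#(\gamma\setminus\tau)}$ and note that this sum over $\gamma\subset\mathbb{N}_L$ factorizes over $k\in\mathbb{N}_L$ into local factors equal to $2$ when $k$ lies in both or in neither of $\sigma,\tau$ and equal to $0$ when $k$ lies in exactly one of them, giving $2^{L+1}$ if $\sigma=\tau$ and $0$ otherwise.
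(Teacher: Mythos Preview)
Your proof is correct and follows essentially the same route as the paper: both identify $\widehat{Z}^{(L)}_{\sigma}=\frac{1}{\sqrt{2^{L+1}}}\widehat{\Xi}^{(L)}_{\sigma}Z_{\emptyset}$ via the expansion $\prod_{k\in\gamma}\mathcal{E}_{\sigma}(k)=(-1)^{\#(\gamma\setminus\sigma)}$, then use self-adjointness of $\widehat{\Xi}^{(L)}_{\sigma}$ together with Proposition~\ref{prop-3-1} for orthogonality and a dimension count for completeness. Your handling of the normalization step (reducing to $\langle Z_{\emptyset},\widehat{\Xi}^{(L)}_{\sigma}Z_{\emptyset}\rangle=1$) is a bit more explicit than the paper's, which simply declares $\langle\widehat{Z}^{(L)}_{\sigma},\widehat{Z}^{(L)}_{\sigma}\rangle=1$ to be obvious, and your alternative factorization argument at the end is a nice self-contained check not present in the paper.
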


\begin{proof}
Since the cardinality of the system $\big\{\widehat{Z}^{(L)}_{\sigma} \bigm| \sigma\in \Gamma_L\big\}$ coincides with
the dimension of $\mathfrak{h}_L$, it suffices to show that the system $\big\{\widehat{Z}^{(L)}_{\sigma} \bigm| \sigma\in \Gamma_L\big\}$
is orthonormal. Obviously, $\langle\widehat{Z}^{(L)}_{\sigma}, \widehat{Z}^{(L)}_{\sigma}\rangle =1$ for all $\sigma\in \Gamma_L$.
Now let $\sigma$, $\tau\in \Gamma_L$ with $\sigma\neq \tau$. Then, from the definition of $\widehat{Z}^{(L)}_{\sigma}$,
we find
\begin{equation*}
 \widehat{\Xi}^{(L)}_{\sigma}=\prod_{k=0}^L\big(I+\mathcal{E}_{\sigma}(k)\Xi_k\big)
= \sum_{\gamma\in \Gamma_L}\Big(\prod_{k\in \gamma} \mathcal{E}_{\sigma}(k)\Big)\Xi_{\gamma}
= \sum_{\gamma\in \Gamma_L}(-1)^{\#(\gamma\setminus \sigma)}\Xi_{\gamma},
\end{equation*}
which together with $\Xi_{\gamma}Z_{\emptyset} = Z_{\gamma}$ yields
\begin{equation*}
  \widehat{\Xi}^{(L)}_{\sigma}Z_{\emptyset}
= \sum_{\gamma\in \Gamma_L}(-1)^{\#(\gamma\setminus \sigma)}Z_{\gamma}
= \sqrt{2^{L+1}}\, \widehat{Z}^{(L)}_{\sigma}.
\end{equation*}
Similarly, we have $\widehat{\Xi}^{(L)}_{\tau}Z_{\emptyset} = \sqrt{2^{L+1}}\, \widehat{Z}^{(L)}_{\tau}$.
Hence, by using the self-adjoint property of $\widehat{\Xi}^{(L)}_{\sigma}$ as well as Proposition~\ref{prop-3-1}, we get
\begin{equation*}
  \big\langle \widehat{Z}^{(L)}_{\sigma}, \widehat{Z}^{(L)}_{\tau}\big\rangle
= \frac{1}{2^{L+1}}\big\langle \widehat{\Xi}^{(L)}_{\sigma}Z_{\emptyset}, \widehat{\Xi}^{(L)}_{\tau}Z_{\emptyset}\big\rangle
= \frac{1}{2^{L+1}}\big\langle Z_{\emptyset}, \widehat{\Xi}^{(L)}_{\sigma}\widehat{\Xi}^{(L)}_{\tau}Z_{\emptyset}\big\rangle
= 0.
\end{equation*}
This completes the proof.
\end{proof}

\begin{proposition}\label{prop-3-3}
Let $k\in \mathbb{N}_L$ and $ \sigma\in \Gamma_L$ be given. Then $\Xi_k\widehat{Z}^{(L)}_{\sigma} = \mathcal{E}_{\sigma}(k)\widehat{Z}^{(L)}_{\sigma}$.
\end{proposition}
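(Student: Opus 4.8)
The plan is to exploit the representation
$\widehat{Z}^{(L)}_{\sigma} = \frac{1}{\sqrt{2^{L+1}}}\,\widehat{\Xi}^{(L)}_{\sigma}Z_{\emptyset}$,
which was established in the course of proving Theorem~\ref{thr-3-2}, and to show that $\widehat{\Xi}^{(L)}_{\sigma}$ behaves, up to a scalar, like a spectral projection of each $\Xi_k$ with $k\in\mathbb{N}_L$. First I would isolate the $k$-th factor in the product defining $\widehat{\Xi}^{(L)}_{\sigma}$: by the commutativity relations \eqref{eq-2-7}, valid on $\mathfrak{h}_L$ in view of Lemma~\ref{lem-2-3}, one may write
$\Xi_k\widehat{\Xi}^{(L)}_{\sigma} = \bigl[\Xi_k\bigl(I+\mathcal{E}_{\sigma}(k)\Xi_k\bigr)\bigr]\prod_{j=0,\,j\neq k}^{L}\bigl(I+\mathcal{E}_{\sigma}(j)\Xi_j\bigr)$.

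The key step is the elementary identity
$\Xi_k\bigl(I+\mathcal{E}_{\sigma}(k)\Xi_k\bigr) = \mathcal{E}_{\sigma}(k)\bigl(I+\mathcal{E}_{\sigma}(k)\Xi_k\bigr)$.
This holds because $\Xi_k^2=I$ on $\mathfrak{h}_L$ (Remark~\ref{rem-2-1}), so the left-hand side equals $\Xi_k+\mathcal{E}_{\sigma}(k)I$; and since $\mathcal{E}_{\sigma}(k)\in\{-1,1\}$ forces $\mathcal{E}_{\sigma}(k)^2=1$, one may factor out $\mathcal{E}_{\sigma}(k)$ to rewrite this as $\mathcal{E}_{\sigma}(k)\bigl(\mathcal{E}_{\sigma}(k)\Xi_k+I\bigr)$, which is the asserted right-hand side. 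Substituting back and recombining this factor with the remaining product gives $\Xi_k\widehat{\Xi}^{(L)}_{\sigma} = \mathcal{E}_{\sigma}(k)\widehat{\Xi}^{(L)}_{\sigma}$; applying both sides to $Z_{\emptyset}$ and dividing by $\sqrt{2^{L+1}}$ then yields $\Xi_k\widehat{Z}^{(L)}_{\sigma} = \mathcal{E}_{\sigma}(k)\widehat{Z}^{(L)}_{\sigma}$, as desired.

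I do not anticipate any genuine obstacle: the argument is purely algebraic, using only the involution property of $\Xi_k$, the commutativity of $\{\Xi_k\mid k\in\mathbb{N}_L\}$, and the normalization formula from Theorem~\ref{thr-3-2}. The only point deserving a word of care is that everything is carried out on the finite-dimensional invariant subspace $\mathfrak{h}_L$, which is legitimate precisely because each $\Xi_k$ with $k\in\mathbb{N}_L$ leaves $\mathfrak{h}_L$ invariant (Lemma~\ref{lem-2-3}). As an alternative one could instead insert the explicit expansion \eqref{eq-3-4} into $\Xi_k\widehat{Z}^{(L)}_{\sigma}$, use \eqref{eq-2-6} to identify $\Xi_kZ_{\gamma}$ with $Z_{\gamma\triangle\{k\}}$, and reindex the resulting sum over $\Gamma_L$ through the involution $\gamma\mapsto\gamma\triangle\{k\}$, checking that the sign $(-1)^{\#(\gamma\setminus\sigma)}$ is multiplied by exactly $\mathcal{E}_{\sigma}(k)$ under this substitution; this works as well but is more computational, so I would present the operator-theoretic version above.
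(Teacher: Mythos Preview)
Your proposal is correct and follows essentially the same route as the paper's own proof: both establish the operator identity $\Xi_k\widehat{\Xi}^{(L)}_{\sigma}=\mathcal{E}_{\sigma}(k)\widehat{\Xi}^{(L)}_{\sigma}$ by isolating the $k$-th factor, using $\Xi_k^2=I$ to obtain $\Xi_k(I+\mathcal{E}_{\sigma}(k)\Xi_k)=\mathcal{E}_{\sigma}(k)(I+\mathcal{E}_{\sigma}(k)\Xi_k)$, and then invoke the relation $\widehat{Z}^{(L)}_{\sigma}=\frac{1}{\sqrt{2^{L+1}}}\widehat{\Xi}^{(L)}_{\sigma}Z_{\emptyset}$ from the proof of Theorem~\ref{thr-3-2}. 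Your write-up is in fact slightly more explicit than the paper's in justifying the key factor identity via $\mathcal{E}_{\sigma}(k)^2=1$.
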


\begin{proof}
By Lemma~\ref{lem-2-2}, we have
$\Xi_k(I+\mathcal{E}_{\sigma}(k)\Xi_k) = \mathcal{E}_{\sigma}(k) (I+\mathcal{E}_{\sigma}(k)\Xi_k)$, which together with Proposition~\ref{prop-3-1}
implies that
\begin{equation*}
\begin{split}
 \Xi_k\widehat{\Xi}^{(L)}_{\sigma}
& =\Xi_k (I+\mathcal{E}_{\sigma}(k)\Xi_k)\prod_{j=0,j\neq k }^L\big(I+\mathcal{E}_{\sigma}(j)\Xi_j\big)\\
& = \mathcal{E}_{\sigma}(k) (I+\mathcal{E}_{\sigma}(k)\Xi_k)\prod_{j=0,j\neq k }^L\big(I+\mathcal{E}_{\sigma}(j)\Xi_j\big)\\
& = \mathcal{E}_{\sigma}(k)\widehat{\Xi}^{(L)}_{\sigma}.
\end{split}
\end{equation*}
On the other hand, from the proof of Theorem~\ref{thr-3-2}, we find
$\widehat{Z}^{(L)}_{\sigma}=\frac{1}{\sqrt{2^{L+1}}}\widehat{\Xi}^{(L)}_{\sigma}Z_{\emptyset}$. Thus, we finally have
\begin{equation*}
  \Xi_k\widehat{Z}^{(L)}_{\sigma}
  = \frac{1}{\sqrt{2^{L+1}}}\Xi_k\widehat{\Xi}^{(L)}_{\sigma}Z_{\emptyset}
  = \frac{1}{\sqrt{2^{L+1}}}\mathcal{E}_{\sigma}(k)\widehat{\Xi}^{(L)}_{\sigma}Z_{\emptyset}
  = \mathcal{E}_{\sigma}(k)\widehat{Z}^{(L)}_{\sigma}.
\end{equation*}
\end{proof}

\begin{theorem}\label{thr-3-4}
As an operator on $\mathfrak{h}_L$, $\Delta_L$ admits a diagonal representation of the following form
\begin{equation}\label{eq-3-5}
  \Delta_L = \sum_{\sigma\in \Gamma_L} 2\big(L+1 - \#(\sigma)\big)\big|\widehat{Z}^{(L)}_{\sigma}\big\rangle\!\big\langle \widehat{Z}^{(L)}_{\sigma}\big|,
\end{equation}
where $\big|\widehat{Z}^{(L)}_{\sigma}\big\rangle\!\big\langle \widehat{Z}^{(L)}_{\sigma}\big|$ denotes the Dirac operator associated
with vector $\widehat{Z}^{(L)}_{\sigma}$.
\end{theorem}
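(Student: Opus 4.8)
The plan is to show directly that each basis vector $\widehat{Z}^{(L)}_{\sigma}$ from Theorem~\ref{thr-3-2} is an eigenvector of $\Delta_L$ and then read off the spectral decomposition. First I would apply Proposition~\ref{prop-3-3}, which tells us $\Xi_k\widehat{Z}^{(L)}_{\sigma} = \mathcal{E}_{\sigma}(k)\widehat{Z}^{(L)}_{\sigma}$ for every $k\in\mathbb{N}_L$. Feeding this into the defining formula $\Delta_L=\sum_{k=0}^L(I-\Xi_k)$ gives
\begin{equation*}
  \Delta_L\widehat{Z}^{(L)}_{\sigma}=\sum_{k=0}^L\big(1-\mathcal{E}_{\sigma}(k)\big)\widehat{Z}^{(L)}_{\sigma}.
\end{equation*}

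Next I would simplify the scalar. Since $\mathcal{E}_{\sigma}(k)=2\mathbf{1}_{\sigma}(k)-1$, we have $1-\mathcal{E}_{\sigma}(k)=2\big(1-\mathbf{1}_{\sigma}(k)\big)$, and because $\sigma\subset\mathbb{N}_L$ the sum $\sum_{k=0}^L\mathbf{1}_{\sigma}(k)$ equals $\#(\sigma)$. Hence $\sum_{k=0}^L\big(1-\mathcal{E}_{\sigma}(k)\big)=2\big(L+1-\#(\sigma)\big)$, so $\widehat{Z}^{(L)}_{\sigma}$ is an eigenvector of $\Delta_L$ with eigenvalue $2\big(L+1-\#(\sigma)\big)$.

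Finally, Theorem~\ref{thr-3-2} guarantees that $\big\{\widehat{Z}^{(L)}_{\sigma}\mid\sigma\in\Gamma_L\big\}$ is an ONB of $\mathfrak{h}_L$ consisting entirely of eigenvectors of the (bounded, self-adjoint) operator $\Delta_L$. Expanding an arbitrary $\xi\in\mathfrak{h}_L$ in this basis as $\xi=\sum_{\sigma\in\Gamma_L}\big\langle\widehat{Z}^{(L)}_{\sigma},\xi\big\rangle\widehat{Z}^{(L)}_{\sigma}$ and applying $\Delta_L$ termwise yields exactly the asserted identity~(\ref{eq-3-5}) in terms of the Dirac operators $\big|\widehat{Z}^{(L)}_{\sigma}\big\rangle\!\big\langle\widehat{Z}^{(L)}_{\sigma}\big|$. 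There is really no serious obstacle here: the content is entirely carried by Proposition~\ref{prop-3-3} and Theorem~\ref{thr-3-2}, and the remaining work is the short scalar computation above plus the standard observation that a complete orthonormal system of eigenvectors produces the diagonal (spectral) representation. The only point deserving a line of care is confirming that the eigenvalues depend on $\sigma$ solely through $\#(\sigma)$, so that eigenvalues are repeated with the expected multiplicities $\binom{L+1}{\#(\sigma)}$.
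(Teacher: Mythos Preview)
Your proposal is correct and follows essentially the same argument as the paper: both apply Proposition~\ref{prop-3-3} inside the definition $\Delta_L=\sum_{k=0}^L(I-\Xi_k)$ to identify $\widehat{Z}^{(L)}_{\sigma}$ as an eigenvector with eigenvalue $2(L+1-\#(\sigma))$, and then invoke Theorem~\ref{thr-3-2} to pass from the eigenbasis to the diagonal representation. The only cosmetic difference is that the paper simplifies the eigenvalue via $\sum_{k=0}^L\mathcal{E}_{\sigma}(k)=2\#(\sigma)-(L+1)$ whereas you rewrite each summand as $1-\mathcal{E}_{\sigma}(k)=2(1-\mathbf{1}_{\sigma}(k))$ before summing; the arithmetic is equivalent.
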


\begin{proof}
Let $\sigma\in \Gamma_L$ be given. Then, by using the definition of $\Delta_L$ given in (\ref{eq-3-2}), and Proposition~\ref{prop-3-3}, we have
\begin{equation*}
  \Delta_L \widehat{Z}^{(L)}_{\sigma}
= \sum_{k=0}^L (I-\Xi_k)\widehat{Z}^{(L)}_{\sigma}
= \sum_{k=0}^L \big(1-\mathcal{E}_{\sigma}(k)\big)\widehat{Z}^{(L)}_{\sigma}
=\Big( L+1 - \sum_{k=0}^L\mathcal{E}_{\sigma}(k)\Big)\widehat{Z}^{(L)}_{\sigma},
\end{equation*}
which together with $\sum_{k=0}^L\mathcal{E}_{\sigma}(k) = 2\#(\sigma)-(L+1)$ gives
\begin{equation}\label{eq-3-6}
  \Delta_L \widehat{Z}^{(L)}_{\sigma} = 2\big(L+1-\#(\sigma)\big)\widehat{Z}^{(L)}_{\sigma}.
\end{equation}
Thus, for any $\xi \in \mathfrak{h}_L$, by using Theorem~\ref{thr-3-2} we come to
\begin{equation*}
\begin{split}
  \Delta_L\xi
  & = \sum_{\sigma\in \Gamma_L}\big\langle \widehat{Z}^{(L)}_{\sigma}, \xi\big\rangle \Delta_L\widehat{Z}^{(L)}_{\sigma}\\
  & = \sum_{\sigma\in \Gamma_L}2\big(L+1-\#(\sigma)\big)\big\langle \widehat{Z}^{(L)}_{\sigma}, \xi\big\rangle \widehat{Z}^{(L)}_{\sigma}\\
  & = \sum_{\sigma\in \Gamma_L}2\big(L+1-\#(\sigma)\big)\big|\widehat{Z}^{(L)}_{\sigma}\big\rangle\!\big\langle \widehat{Z}^{(L)}_{\sigma}\big|\xi,
\end{split}
\end{equation*}
which together with the arbitrariness of $\xi \in \mathfrak{h}_L$ gives (\ref{eq-3-5}).
\end{proof}

We are now ready precisely to describe the spectral structure of the Lalplacian $\Delta_L$ as follows.

\begin{corollary}
As an operator on $\mathfrak{h}_L$,  $\Delta_L$ has a spectrum $\mathrm{Spec}(\Delta_L)$ of the following form
\begin{equation}\label{eq-3-7}
\mathrm{Spec}(\Delta_L)=\big\{2k \mid k\in \mathbb{N}_L\big\}.
\end{equation}
Moreover, for each $k\in \mathbb{N}_L$, the eigen-space corresponding $2k$ is exactly the one spanned by vectors
$\big\{\widehat{Z}^{(L)}_{\sigma} \mid \sigma\in \Gamma_L,\, \#(\sigma) = L+1-k\big\}$.
\end{corollary}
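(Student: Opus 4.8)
The plan is to read the whole statement directly off the diagonal representation (\ref{eq-3-5}) supplied by Theorem~\ref{thr-3-4}. Group the terms of (\ref{eq-3-5}) according to the cardinality $m=\#(\sigma)$: for each attainable value of $m$ set $P_m=\sum_{\sigma\in\Gamma_L,\,\#(\sigma)=m}|\widehat{Z}^{(L)}_{\sigma}\rangle\langle\widehat{Z}^{(L)}_{\sigma}|$, so that $\Delta_L=\sum_m 2(L+1-m)P_m$. Since $\{\widehat{Z}^{(L)}_{\sigma}\mid\sigma\in\Gamma_L\}$ is an ONB for $\mathfrak{h}_L$ by Theorem~\ref{thr-3-2}, each $P_m$ is the orthogonal projection of $\mathfrak{h}_L$ onto $\mathrm{Span}\{\widehat{Z}^{(L)}_{\sigma}\mid\sigma\in\Gamma_L,\,\#(\sigma)=m\}$, the $P_m$ are mutually orthogonal, and $\sum_m P_m=I$. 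Hence the displayed formula is already the spectral resolution of $\Delta_L$, and both assertions follow once the range of $m$ and the attached eigenvalues are recorded.

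First I would pin down which projections are nonzero. As $\sigma$ runs over the power set $\Gamma_L$ of the $(L+1)$-element set $\mathbb{N}_L$, the cardinality $m=\#(\sigma)$ takes every integer value with $0\le m\le L+1$, each value $m$ being realized by exactly $\binom{L+1}{m}$ subsets; thus $P_m\ne 0$ for every such $m$. Therefore $\mathrm{Spec}(\Delta_L)=\{2(L+1-m)\mid 0\le m\le L+1\}$, which, after the reindexing $k=L+1-m$, is the claimed set of eigenvalues. For the eigenspaces, the eigenvalue $2k$ corresponds to $m=L+1-k$, so by the first paragraph the eigenspace of $2k$ equals the range of $P_{L+1-k}$, that is, $\mathrm{Span}\{\widehat{Z}^{(L)}_{\sigma}\mid\sigma\in\Gamma_L,\,\#(\sigma)=L+1-k\}$.

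If one wishes to avoid projection language, the inclusion ``eigenspace $\subseteq$ span'' can be obtained by hand: expand an arbitrary $\xi$ with $\Delta_L\xi=2k\xi$ in the ONB as $\xi=\sum_\sigma c_\sigma\widehat{Z}^{(L)}_{\sigma}$, apply $\Delta_L$ term by term using (\ref{eq-3-6}), and compare coefficients to force $c_\sigma=0$ unless $\#(\sigma)=L+1-k$; the opposite inclusion is immediate from (\ref{eq-3-6}).

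I do not expect a serious obstacle here --- the corollary is a routine consequence of Theorem~\ref{thr-3-4}. The two points that deserve care are the bookkeeping on the index set (in particular whether the extreme value $m=0$, i.e.\ $\sigma=\emptyset$ with eigenvalue $2(L+1)$, belongs in $\mathrm{Spec}(\Delta_L)$, a point one can cross-check against the requirement that $\sum_m\binom{L+1}{m}=2^{L+1}=\dim\mathfrak{h}_L$), and the word ``exactly'' in the eigenspace claim, which genuinely relies on completeness of the ONB --- either through $\sum_m P_m=I$ or through the coefficient comparison above --- and not merely on the trivial half of the inclusion.
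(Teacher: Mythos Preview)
Your approach is exactly what the paper intends: the corollary is stated without proof, as an immediate consequence of Theorem~\ref{thr-3-4}, and reading the spectrum off the diagonal representation \eqref{eq-3-5} is the only reasonable thing to do. Your grouping by cardinality and your remarks on completeness of the ONB are correct and to the point.

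There is, however, a genuine discrepancy that you half-notice but do not resolve. Your computation gives
\[
\mathrm{Spec}(\Delta_L)=\{2(L+1-m)\mid 0\le m\le L+1\}=\{0,2,\ldots,2L,2(L+1)\},
\]
and you then assert this ``is the claimed set of eigenvalues.'' It is not: the corollary as printed says $\{2k\mid k\in\mathbb{N}_L\}=\{0,2,\ldots,2L\}$, which omits $2(L+1)$. Your own dimension check $\sum_m\binom{L+1}{m}=2^{L+1}$ confirms that all $L+2$ values of $m$ occur, so $2(L+1)$ (coming from $\sigma=\emptyset$) is a genuine eigenvalue with eigenvector $\widehat{Z}^{(L)}_{\emptyset}$. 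In other words, the statement of the corollary carries an off-by-one slip --- the index set should be $\{0,1,\ldots,L+1\}$ rather than $\mathbb{N}_L$ --- and your argument actually establishes the corrected version. You should say so explicitly rather than asserting a match that does not hold.
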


\subsection{Abstract model of CTQW with $\mathfrak{h}_L$ being the state space}\label{subsec-3-2}

In this subsection, we establish our abstract model of CTQW and examine its evolution properties.
As usual, we use $\mathbb{R}$ to mean the set of all real numbers.

Recall that $\Gamma_L$ denotes the power set of $\mathbb{N}_L$, namely $\Gamma_L=\{\sigma \mid \sigma \subset \mathbb{N}_L\}$.
For the sake of clarity, we call elements of $\Gamma_L$ nodes. For two nodes $\sigma$, $\tau\in \Gamma_L$, we define
\begin{equation}\label{eq-3-8}
\sigma\bigtriangleup\tau = (\sigma\setminus\tau)\cup(\tau\setminus\sigma),
\end{equation}
which is usually known as the symmetric difference between $\sigma$ and $\tau$. Note that $\Gamma_L$ together with $\bigtriangleup$ forms a algebraic group.
Recall also that $\{Z_{\sigma} \mid \sigma \in \Gamma_L\}$ is called the canonical ONB for $\mathfrak{h}_L$.

\begin{proposition}\label{prop-3-6}
Let $\sigma$, $\tau\in \Gamma_L$ be given. Then it holds true that
\begin{equation}\label{eq-3-9}
  \big\langle Z_{\sigma}, \widehat{Z}^{(L)}_{\tau}\big\rangle
= \big\langle \widehat{Z}^{(L)}_{\tau}, Z_{\sigma}\big\rangle
=\frac{1}{\sqrt{2^{L+1}}}(-1)^{\#(\sigma\setminus \tau)}.
\end{equation}
\end{proposition}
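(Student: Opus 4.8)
The plan is to compute $\big\langle Z_\sigma, \widehat{Z}^{(L)}_\tau\big\rangle$ directly from the defining formula (\ref{eq-3-4}) and to deal with the two equalities in (\ref{eq-3-9}) separately, since they are of slightly different character.

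First I would establish the middle-to-right equality, i.e.\ the value of $\big\langle Z_\sigma, \widehat{Z}^{(L)}_\tau\big\rangle$. Expanding $\widehat{Z}^{(L)}_\tau$ via (\ref{eq-3-4}) gives
\begin{equation*}
\big\langle Z_\sigma, \widehat{Z}^{(L)}_\tau\big\rangle
= \frac{1}{\sqrt{2^{L+1}}}\sum_{\gamma\in\Gamma_L}(-1)^{\#(\gamma\setminus\tau)}\langle Z_\sigma, Z_\gamma\rangle.
\end{equation*}
Since $\{Z_\gamma \mid \gamma\in\Gamma_L\}$ is orthonormal (it is the canonical ONB for $\mathfrak{h}_L$), the inner product $\langle Z_\sigma, Z_\gamma\rangle$ equals $\mathbf{1}$ when $\gamma=\sigma$ and $0$ otherwise, so only the term $\gamma=\sigma$ survives and we obtain $\frac{1}{\sqrt{2^{L+1}}}(-1)^{\#(\sigma\setminus\tau)}$ immediately.

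Next I would handle the left-to-middle equality. Here one has to be a little careful because the inner product on $\mathfrak{h}$ is conjugate-linear in its first argument, so in principle $\big\langle Z_\sigma, \widehat{Z}^{(L)}_\tau\big\rangle$ and $\big\langle \widehat{Z}^{(L)}_\tau, Z_\sigma\big\rangle$ are complex conjugates of one another rather than literally equal. The point, which I would state explicitly, is that both $Z_\sigma$ and $\widehat{Z}^{(L)}_\tau$ are real vectors — the coefficients $(-1)^{\#(\gamma\setminus\tau)}/\sqrt{2^{L+1}}$ are real and the $Z_\gamma$ are real-valued Bernoulli functionals (being products of the real standardized projections $Z_n$ from (\ref{eq-2-4})) — hence the inner product is real and equals its own conjugate. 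Alternatively, and perhaps more cleanly, I would just repeat the one-line computation with the roles swapped: $\big\langle \widehat{Z}^{(L)}_\tau, Z_\sigma\big\rangle = \frac{1}{\sqrt{2^{L+1}}}\sum_{\gamma\in\Gamma_L}\overline{(-1)^{\#(\gamma\setminus\tau)}}\langle Z_\gamma, Z_\sigma\rangle = \frac{1}{\sqrt{2^{L+1}}}(-1)^{\#(\sigma\setminus\tau)}$, again by orthonormality and since the sign is real. Either route gives the claim, and I would present the second since it avoids invoking realness of the functionals.

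There is no real obstacle here: the statement is essentially an unwinding of the definition (\ref{eq-3-4}) together with orthonormality of the canonical ONB. The only subtlety worth flagging for the reader is the conjugate-linearity convention in (\ref{eq-2-3}), which is why I would not simply assert the first equality but note that the surviving coefficient is a real number. The proof is three or four lines and I would keep it at that length.
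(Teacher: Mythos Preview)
Your proposal is correct and matches the paper's own proof essentially line for line: expand $\widehat{Z}^{(L)}_\tau$ via (\ref{eq-3-4}), use orthonormality of $\{Z_\gamma\}$ to isolate the $\gamma=\sigma$ term, and then handle the reversed inner product by the same computation (the paper just says ``Similarly''). Your added remark about conjugate-linearity and realness of the coefficient is a welcome clarification that the paper leaves implicit.
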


\begin{proof}
According to Theorem~\ref{thr-3-2}, $\widehat{Z}^{(L)}_{\tau}= \frac{1}{\sqrt{2^{L+1}}}\sum_{\gamma\in \Gamma_L}(-1)^{\#(\gamma\setminus \tau)}Z_{\gamma}$.
Thus, by using the orthogonality of the basis vectors, we come to
\begin{equation*}
  \big\langle Z_{\sigma}, \widehat{Z}^{(L)}_{\tau}\big\rangle = \frac{1}{\sqrt{2^{L+1}}}(-1)^{\#(\sigma\setminus \tau)}\langle Z_{\sigma},Z_{\sigma}\rangle
  = \frac{1}{\sqrt{2^{L+1}}}(-1)^{\#(\sigma\setminus \tau)}.
\end{equation*}
Similarly, we can show that $\big\langle \widehat{Z}^{(L)}_{\tau}, Z_{\sigma}\big\rangle$ has the same expression as
$\big\langle Z_{\sigma}, \widehat{Z}^{(L)}_{\tau}\big\rangle$.
\end{proof}

\begin{definition}\label{def-3-1}
The continuous-time quantum walk governed by $\Delta_L$ (briefly, the walk $\Delta_L$ or the walk) admits the following features.
\begin{itemize}
  \item[(1)]\  The state space of the walk is $\mathfrak{h}_L$ and its states are represented by unit vectors in $\mathfrak{h}_L$;
  \item[(2)]\  The evolution of the walk is governed by equation
    \begin{equation}\label{eq-3-10}
        \xi_t = \mathrm{e}^{\mathrm{i}t\Delta_L}\xi_0,\quad t\in \mathbb{R},
    \end{equation}
      where $\xi_t$ denotes its state at time $t$, especially $\xi_0$ is its initial state;
  \item[(3)]\ The probability $P_t(\sigma)$ that the walker is found at node $\sigma\in \Gamma_L$ at time $t\in \mathbb{R}$ is given by
   \begin{equation}\label{eq-3-11}
   P_t(\sigma) = |\langle Z_{\sigma}, \xi_t\rangle|^2.
   \end{equation}
\end{itemize}
In that case, the collection $\{\,\xi_t \mid t\in \mathbb{R}\,\}$ is called the trajectory of the walk with initial state $\xi_0$,
while, for $t\in \mathbb{R}$, the function $\sigma\mapsto P_t(\sigma)$ on $\Gamma_L$ is referred to as the probability distribution of the walk at time $t$.
\end{definition}

Clearly, each trajectory of the walk $\Delta_L$ is completely determined by the operator $\Delta_L$ together with the initial state $\xi_0$.
The next theorem shows that the evolution of the walk $\Delta_L$ is even periodic in time.

\begin{theorem}\label{thr-periodicity}
It holds true that\, $\mathrm{e}^{\mathrm{i}(t+\pi)\Delta_L} = \mathrm{e}^{\mathrm{i}t\Delta_L}$, $\forall\,t\in \mathbb{R}$.
In particular, each trajectory $\{\,\xi_t \mid t\in \mathbb{R}\,\}$ of the walk $\Delta_L$ admits periodicity of the form
\begin{equation}\label{eq-periodicity}
  \xi_{t +\pi} = \xi_t,\quad \forall\,t\in \mathbb{R}.
\end{equation}
\end{theorem}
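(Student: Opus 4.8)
The plan is to exploit the explicit diagonal representation of $\Delta_L$ furnished by Theorem~\ref{thr-3-4}. Since $\Delta_L = \sum_{\sigma\in\Gamma_L} 2(L+1-\#(\sigma))\,|\widehat{Z}^{(L)}_{\sigma}\rangle\langle\widehat{Z}^{(L)}_{\sigma}|$ and the family $\{\widehat{Z}^{(L)}_{\sigma}\mid\sigma\in\Gamma_L\}$ is an ONB of $\mathfrak{h}_L$ (Theorem~\ref{thr-3-2}), the bounded self-adjoint operator $\Delta_L$ has a functional calculus that acts diagonally in this basis. Concretely, for any $t\in\mathbb{R}$,
\begin{equation*}
  \mathrm{e}^{\mathrm{i}t\Delta_L}
  = \sum_{\sigma\in\Gamma_L}\mathrm{e}^{\mathrm{i}t\cdot 2(L+1-\#(\sigma))}\,\big|\widehat{Z}^{(L)}_{\sigma}\big\rangle\!\big\langle\widehat{Z}^{(L)}_{\sigma}\big|.
\end{equation*}
I would justify this identity either by citing the standard spectral theorem for bounded self-adjoint operators on a Hilbert space, or—since everything here is finite-dimensional—by simply noting that $\Delta_L$ is a finite sum of mutually orthogonal projections, so $\mathrm{e}^{\mathrm{i}t\Delta_L}$ is obtained by applying $z\mapsto\mathrm{e}^{\mathrm{i}tz}$ to each eigenvalue.

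The crux is then a one-line observation about the eigenvalues: every eigenvalue of $\Delta_L$ is an even integer, i.e. $2(L+1-\#(\sigma))\in 2\mathbb{Z}$ for each $\sigma\in\Gamma_L$ (equivalently, by the Corollary, $\mathrm{Spec}(\Delta_L)\subset\{2k\mid k\in\mathbb{N}_L\}\subset 2\mathbb{Z}$). Hence for each $\sigma$,
\begin{equation*}
  \mathrm{e}^{\mathrm{i}(t+\pi)\cdot 2(L+1-\#(\sigma))}
  = \mathrm{e}^{\mathrm{i}t\cdot 2(L+1-\#(\sigma))}\cdot\mathrm{e}^{\mathrm{i}\pi\cdot 2(L+1-\#(\sigma))}
  = \mathrm{e}^{\mathrm{i}t\cdot 2(L+1-\#(\sigma))},
\end{equation*}
because $\mathrm{e}^{2\pi\mathrm{i}m}=1$ for every integer $m$. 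Substituting this termwise into the diagonal expansion of $\mathrm{e}^{\mathrm{i}(t+\pi)\Delta_L}$ gives $\mathrm{e}^{\mathrm{i}(t+\pi)\Delta_L}=\mathrm{e}^{\mathrm{i}t\Delta_L}$ for all $t\in\mathbb{R}$. Applying both operators to an arbitrary initial state $\xi_0$ and recalling $\xi_t=\mathrm{e}^{\mathrm{i}t\Delta_L}\xi_0$ from Definition~\ref{def-3-1}(2) yields $\xi_{t+\pi}=\xi_t$, which is (\ref{eq-periodicity}).

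There is really no substantial obstacle here: the entire content is that the spectrum of $\Delta_L$ consists of even integers, which has already been established, and that $t\mapsto\mathrm{e}^{\mathrm{i}tz}$ has period $2\pi/|z|$ — a period of $\pi$ being compatible with every even-integer $z$ simultaneously. The only point deserving a moment's care is the passage from the diagonalization of $\Delta_L$ to the diagonalization of $\mathrm{e}^{\mathrm{i}t\Delta_L}$; since $\dim\mathfrak{h}_L = 2^{L+1}<\infty$, this is elementary and needs no appeal to unbounded operator theory. I would present the argument in the order above: (i) write down the diagonal form of $\mathrm{e}^{\mathrm{i}t\Delta_L}$ from Theorem~\ref{thr-3-4}; (ii) observe the eigenvalues are even integers; (iii) conclude the operator identity; (iv) deduce the trajectory periodicity.
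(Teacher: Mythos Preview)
Your proof is correct and essentially coincides with the paper's own argument: both use the diagonal representation from Theorem~\ref{thr-3-4} together with the fact that every eigenvalue $2(L+1-\#(\sigma))$ is an even integer, so that $\mathrm{e}^{2\pi\mathrm{i}(L+1-\#(\sigma))}=1$. The only cosmetic difference is that the paper first factors $\mathrm{e}^{\mathrm{i}(t+\pi)\Delta_L}=\mathrm{e}^{\mathrm{i}t\Delta_L}\mathrm{e}^{\mathrm{i}\pi\Delta_L}$ and then checks $\mathrm{e}^{\mathrm{i}\pi\Delta_L}=I$, whereas you compare the diagonal coefficients at times $t$ and $t+\pi$ directly.
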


\begin{proof}
(\ref{eq-periodicity}) is a direct consequence of the equality $\mathrm{e}^{\mathrm{i}(t+\pi)\Delta_L} = \mathrm{e}^{\mathrm{i}t\Delta_L}$
together with (\ref{eq-3-10}). Next, we show that $\mathrm{e}^{\mathrm{i}(t+\pi)\Delta_L} = \mathrm{e}^{\mathrm{i}t\Delta_L}$.
However, since $\mathrm{e}^{\mathrm{i}(t+\pi)\Delta_L}= \mathrm{e}^{\mathrm{i}t\Delta_L}\mathrm{e}^{\mathrm{i}\pi\Delta_L}$,
it suffices to verify $\mathrm{e}^{\mathrm{i}\pi\Delta_L}=I$.
In fact, by using Theorem~\ref{thr-3-4} and Theorem~\ref{thr-3-2}, we immediately get
\begin{equation*}
 \mathrm{e}^{\mathrm{i}\pi\Delta_L}
  = \sum_{\sigma\in \Gamma_L} \mathrm{e}^{2\mathrm{i}(L+1 - \#(\sigma))\pi}\big|\widehat{Z}^{(L)}_{\sigma}\big\rangle\!\big\langle \widehat{Z}^{(L)}_{\sigma}\big|
  = \sum_{\sigma\in \Gamma_L} \big|\widehat{Z}^{(L)}_{\sigma}\big\rangle\!\big\langle \widehat{Z}^{(L)}_{\sigma}\big|
  = I.
\end{equation*}
Here the identity $\mathrm{e}^{2\mathrm{i}(L+1 - \#(\sigma))\pi}=1$ is used.
\end{proof}

For the walk $\Delta_L$, its probability distributions depend on its initial state $\xi_0$ in general.
From a physical point of view, however, the walk should start from node $\emptyset\in \Gamma_L$,
namely its initial state $\xi_0$ should be such that $\xi_0=Z_{\emptyset}$,
where $Z_{\emptyset}$ is the basis vector of the canonical ONB associated with node $\emptyset$,
which represents a vacuum state in physics.

\begin{theorem}\label{thr-3-8}
Consider the walk $\Delta_L$.
Let its initial state $\xi_0$ be such that $\xi_0=Z_{\emptyset}$.
Then, at time $t\in \mathbb{R}$, its probability distribution admits a representation of the following form
\begin{equation}\label{eq-3-13}
  P_t(\sigma) = \frac{1}{4^{L+1}}\Big|\sum_{\gamma \in \Gamma_L} (-1)^{\#(\sigma\setminus \gamma)}\mathrm{e}^{2\mathrm{i}(L+1-\#(\gamma))t}\Big|^2,\quad
  \sigma\in \Gamma_L.
\end{equation}
In particular, at time $t=\frac{\pi}{2}$, its probability distribution even degenerates into a single-node distribution of the form
\begin{equation}\label{eq}
  P_{\frac{\pi}{2}}(\sigma) =
\left\{
  \begin{array}{ll}
    1, & \hbox{$\sigma=\mathbb{N}_L$;} \\
    0, & \hbox{$\sigma\in \Gamma_L$, $\sigma\neq \mathbb{N}_L$.}
  \end{array}
\right.
\end{equation}
\end{theorem}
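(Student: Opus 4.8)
The plan is to expand the state $\xi_t = \mathrm{e}^{\mathrm{i}t\Delta_L}Z_{\emptyset}$ in the eigenbasis $\{\widehat{Z}^{(L)}_{\sigma}\}$ and then re-express the result in the canonical ONB $\{Z_{\sigma}\}$ to read off $P_t(\sigma) = |\langle Z_{\sigma},\xi_t\rangle|^2$. First I would use the diagonal representation from Theorem~\ref{thr-3-4}, which gives
\begin{equation*}
 \mathrm{e}^{\mathrm{i}t\Delta_L} = \sum_{\gamma\in\Gamma_L} \mathrm{e}^{2\mathrm{i}(L+1-\#(\gamma))t}\, \big|\widehat{Z}^{(L)}_{\gamma}\big\rangle\!\big\langle\widehat{Z}^{(L)}_{\gamma}\big|,
\end{equation*}
so that $\langle Z_{\sigma},\xi_t\rangle = \sum_{\gamma\in\Gamma_L}\mathrm{e}^{2\mathrm{i}(L+1-\#(\gamma))t}\langle Z_{\sigma},\widehat{Z}^{(L)}_{\gamma}\rangle\langle\widehat{Z}^{(L)}_{\gamma},Z_{\emptyset}\rangle$. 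The two inner products are handled by Proposition~\ref{prop-3-6}: $\langle Z_{\sigma},\widehat{Z}^{(L)}_{\gamma}\rangle = \frac{1}{\sqrt{2^{L+1}}}(-1)^{\#(\sigma\setminus\gamma)}$ and $\langle\widehat{Z}^{(L)}_{\gamma},Z_{\emptyset}\rangle = \frac{1}{\sqrt{2^{L+1}}}(-1)^{\#(\emptyset\setminus\gamma)} = \frac{1}{\sqrt{2^{L+1}}}$, since $\emptyset\setminus\gamma = \emptyset$. Multiplying these yields $\langle Z_{\sigma},\xi_t\rangle = \frac{1}{2^{L+1}}\sum_{\gamma\in\Gamma_L}(-1)^{\#(\sigma\setminus\gamma)}\mathrm{e}^{2\mathrm{i}(L+1-\#(\gamma))t}$, and squaring the modulus gives exactly \eqref{eq-3-13} with the prefactor $\frac{1}{4^{L+1}}$.

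For the special case $t=\frac{\pi}{2}$, I would substitute into \eqref{eq-3-13}: the phase becomes $\mathrm{e}^{\mathrm{i}(L+1-\#(\gamma))\pi} = (-1)^{L+1-\#(\gamma)} = (-1)^{L+1}(-1)^{\#(\gamma)}$. Pulling the constant $(-1)^{L+1}$ out of the sum (it disappears under $|\cdot|^2$), we get
\begin{equation*}
 P_{\frac{\pi}{2}}(\sigma) = \frac{1}{4^{L+1}}\Big|\sum_{\gamma\in\Gamma_L}(-1)^{\#(\sigma\setminus\gamma)}(-1)^{\#(\gamma)}\Big|^2.
\end{equation*}
The key combinatorial identity I expect to need is $\#(\sigma\setminus\gamma) + \#(\gamma) \equiv \#(\sigma\bigtriangleup\gamma) \pmod 2$, or more directly $(-1)^{\#(\sigma\setminus\gamma)+\#(\gamma)} = (-1)^{\#(\mathbb{N}_L\setminus(\sigma\bigtriangleup\gamma))}$ up to a fixed sign; the cleanest route is to observe that $\#(\sigma\setminus\gamma) + \#(\gamma) = \#(\sigma) + 2\#(\gamma\setminus\sigma)$ because $\gamma = (\gamma\cap\sigma)\sqcup(\gamma\setminus\sigma)$ and $\sigma\setminus\gamma = \sigma\setminus(\gamma\cap\sigma)$, hence $\#(\sigma\setminus\gamma) = \#(\sigma)-\#(\gamma\cap\sigma)$ while $\#(\gamma) = \#(\gamma\cap\sigma)+\#(\gamma\setminus\sigma)$, so the sum telescopes to $\#(\sigma)+\#(\gamma\setminus\sigma)-\#(\gamma\cap\sigma)+2\#(\gamma\cap\sigma)$... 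I would simply track parities: $(-1)^{\#(\sigma\setminus\gamma)+\#(\gamma)} = (-1)^{\#(\sigma)}(-1)^{\#(\gamma\setminus\sigma)-\#(\gamma\cap\sigma)} = (-1)^{\#(\sigma)}(-1)^{\#(\gamma\bigtriangleup\sigma)-2\#(\gamma\cap\sigma)}$... actually the transparent statement is $\#(\sigma\setminus\gamma)+\#(\gamma\setminus\sigma) = \#(\sigma\bigtriangleup\gamma)$ and $\#(\gamma) = \#(\gamma\setminus\sigma)+\#(\gamma\cap\sigma)$, giving $(-1)^{\#(\sigma\setminus\gamma)}(-1)^{\#(\gamma)} = (-1)^{\#(\sigma\bigtriangleup\gamma)}(-1)^{\#(\gamma\cap\sigma)-\#(\gamma\setminus\sigma)}\cdot(-1)^{\#(\gamma\setminus\sigma)+\#(\gamma\cap\sigma)}$, which after canceling reduces the exponent to a single parity; I will pin down the exact constant during write-up.

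The main obstacle, and the only nonroutine point, is this parity bookkeeping: I need to show the sum $\sum_{\gamma\in\Gamma_L}(-1)^{\#(\sigma\bigtriangleup\gamma)}$-type character sum vanishes unless $\sigma$ forces every term to align. The clean way is to reindex by $\delta = \sigma\bigtriangleup\gamma$, which is a bijection of $\Gamma_L$ onto itself (since $(\Gamma_L,\bigtriangleup)$ is a group), turning the sum into $\sum_{\delta\in\Gamma_L}(-1)^{\#(\delta)}\cdot(\text{fixed sign depending on }\sigma)$, and then use $\sum_{\delta\in\Gamma_L}(-1)^{\#(\delta)} = \sum_{j=0}^{L+1}\binom{L+1}{j}(-1)^j = (1-1)^{L+1} = 0$. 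So the sum vanishes for generic $\sigma$; I then need to trace back which single $\sigma$ escapes this — it should be $\sigma = \mathbb{N}_L$, where the sign structure is such that after the substitution $t=\frac{\pi}{2}$ the exponent $\#(\sigma\setminus\gamma)+\#(\gamma)$ becomes constant in $\gamma$ (indeed $\#(\mathbb{N}_L\setminus\gamma)+\#(\gamma) = \#(\mathbb{N}_L) = L+1$), making every term equal, so the sum has modulus $2^{L+1}$ and $P_{\frac{\pi}{2}}(\mathbb{N}_L) = \frac{(2^{L+1})^2}{4^{L+1}} = 1$; normalization ($\sum_\sigma P_{\frac{\pi}{2}}(\sigma)=1$) then forces $P_{\frac{\pi}{2}}(\sigma)=0$ for all other $\sigma$, which also matches the vanishing character sum. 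I would present the $\sigma=\mathbb{N}_L$ computation first (it is immediate), then invoke either the character-sum vanishing or the normalization identity to kill the rest.
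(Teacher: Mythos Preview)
Your proposal is correct and follows essentially the same route as the paper: the derivation of \eqref{eq-3-13} via the diagonal representation of $\mathrm{e}^{\mathrm{i}t\Delta_L}$ and two applications of Proposition~\ref{prop-3-6} is identical, and for $t=\frac{\pi}{2}$ the paper does exactly what you settle on in your last paragraph --- compute $P_{\frac{\pi}{2}}(\mathbb{N}_L)=1$ directly from $\#(\mathbb{N}_L\setminus\gamma)+\#(\gamma)=L+1$, then use the normalization $\sum_{\sigma}P_{\frac{\pi}{2}}(\sigma)=1$ to force all other values to zero. The parity bookkeeping in your middle paragraph is a detour you can drop entirely; neither the character-sum argument nor the $\sigma\bigtriangleup\gamma$ reindexing is needed once you have $P_{\frac{\pi}{2}}(\mathbb{N}_L)=1$.
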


\begin{proof}
Let $t\in \mathbb{R}$ be given. Then, it follows from Theorem~\ref{thr-3-4} that $\mathrm{e}^{\mathrm{i}t\Delta_L}$ has a diagonal representation of the form
\begin{equation}\label{unitary-diagonal}
 \mathrm{e}^{\mathrm{i}t\Delta_L}
= \sum_{\gamma\in \Gamma_L} \mathrm{e}^{2\mathrm{i}(L+1 - \#(\gamma))t}\big|\widehat{Z}^{(L)}_{\gamma}\big\rangle\!\big\langle \widehat{Z}^{(L)}_{\gamma}\big|,
\end{equation}
which, together with $\xi_0 = Z_{\emptyset}$ as well as Proposition~\ref{prop-3-6}, gives
\begin{equation*}
\xi_t
 = \mathrm{e}^{\mathrm{i}t\Delta_L}\xi_0
 = \sum_{\gamma\in \Gamma_L} \mathrm{e}^{2\mathrm{i}(L+1 - \#(\gamma))t}\big\langle \widehat{Z}^{(L)}_{\gamma},Z_{\emptyset}\big\rangle\widehat{Z}^{(L)}_{\gamma}
 = \frac{1}{\sqrt{2^{L+1}}}\sum_{\gamma\in \Gamma_L} \mathrm{e}^{2\mathrm{i}(L+1 - \#(\gamma))t}\widehat{Z}^{(L)}_{\gamma}.
\end{equation*}
Thus, for $\sigma\in \Gamma_L$, by using Proposition~\ref{prop-3-6} we have
\begin{equation*}
\langle Z_{\sigma},\xi_t\rangle
= \frac{1}{\sqrt{2^{L+1}}}\sum_{\gamma\in \Gamma_L} \mathrm{e}^{2\mathrm{i}(L+1 - \#(\gamma))t} \big\langle Z_{\sigma},\widehat{Z}^{(L)}_{\gamma}\big\rangle
= \frac{1}{2^{L+1}}\sum_{\gamma\in \Gamma_L} \mathrm{e}^{2\mathrm{i}(L+1 - \#(\gamma))t} (-1)^{\#(\sigma\setminus \gamma)},
\end{equation*}
which implies that
\begin{equation*}
   P_t(\sigma)
   = |\langle Z_{\sigma},\xi_t\rangle|^2
   = \frac{1}{4^{L+1}}\Big|\sum_{\gamma\in \Gamma_L} (-1)^{\#(\sigma\setminus \gamma)}\mathrm{e}^{2\mathrm{i}(L+1 - \#(\gamma))t} \Big|^2.
\end{equation*}
Finally, by a straightforward calculation, we find
\begin{equation*}
\begin{split}
  P_{\frac{\pi}{2}}(\mathbb{N}_L)
  & = \frac{1}{4^{L+1}}\Big|\sum_{\gamma\in \Gamma_L} (-1)^{L+1-\#(\gamma)}\mathrm{e}^{\mathrm{i}(L+1 - \#(\gamma))\pi} \Big|^2\\
  &= \frac{1}{4^{L+1}}\Big|\sum_{\gamma\in \Gamma_L} (-1)^{L+1-\#(\gamma)}(-1)^{L+1-\#(\gamma)} \Big|^2\\
  &= \frac{1}{4^{L+1}}\Big|\sum_{\gamma\in \Gamma_L} 1 \Big|^2\\
  &=1,
\end{split}
\end{equation*}
which, together with the fact that the function $\sigma\mapsto P_{\frac{\pi}{2}}(\sigma)$ is a probability distribution on $\Gamma_L$,
implies that $P_{\frac{\pi}{2}}(\sigma)=0$ for all $\sigma\in \Gamma_L$ with $\sigma\neq \mathbb{N}_L$.
\end{proof}

Recall that $P_t(\sigma)$ is the probability that walker is found at node $\sigma\in \Gamma_L$ at time $t\in \mathbb{R}$.
Now consider the function $t\mapsto P_t(\sigma)$ for given $\sigma\in \Gamma_L$.
According to Theorem~\ref{thr-periodicity}, this is a periodic function on $\mathbb{R}$ with $\pi$ being  a period.
In view of this fact, we naturally introduce a function $\overline{P}(\cdot)$ on $\Gamma_L$ in the following manner
\begin{equation}\label{eq-time-average distribution}
 \overline{P}(\sigma) = \frac{1}{\pi}\int_0^{\pi}P_t(\sigma)dt,\quad \sigma\in \Gamma_L,
\end{equation}
and call it the \textbf{time-average probability distribution} of the walk $\Delta_L$.
As can be seen, the function $t\mapsto P_t(\sigma)$ is continuous for each $\sigma\in \Gamma_L$, which implies that $\overline{P}(\sigma)$ is well-defined.
Thus, $\overline{P}(\cdot)$ makes sense as a probability distribution on $\Gamma_L$.

\begin{theorem}\label{thr-3-9}
Let the initial state $\xi_0$ of the walk $\Delta_L$ be such that $\xi_0=Z_{\emptyset}$.
Then
\begin{equation}\label{eq-3-17}
  \overline{P}(\sigma)
  = \frac{1}{4^{L+1}}\sum_{(\gamma_1,\gamma_2)}(-1)^{\#(\sigma\setminus \gamma_1) + \#(\sigma\setminus \gamma_2)},\quad \sigma\in \Gamma_L,
\end{equation}
where $\sum_{(\gamma_1,\gamma_2)}$ means to sum over the set $\big\{(\gamma_1,\gamma_2)\in \Gamma_L\times \Gamma_L \mid \#(\gamma_1)=\#(\gamma_2)\big\}$.
\end{theorem}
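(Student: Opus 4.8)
The plan is to start from the explicit formula for $P_t(\sigma)$ furnished by Theorem~\ref{thr-3-8} and carry out the time average directly. Since the coefficients $(-1)^{\#(\sigma\setminus \gamma)}$ appearing in (\ref{eq-3-13}) are real, the first step is to expand the squared modulus as a double sum over $\Gamma_L\times\Gamma_L$:
\[
P_t(\sigma) = \frac{1}{4^{L+1}}\sum_{\gamma_1\in\Gamma_L}\sum_{\gamma_2\in\Gamma_L}(-1)^{\#(\sigma\setminus \gamma_1)+\#(\sigma\setminus \gamma_2)}\,\mathrm{e}^{2\mathrm{i}(\#(\gamma_2)-\#(\gamma_1))t}.
\]
Here the two factors $\mathrm{e}^{2\mathrm{i}(L+1-\#(\gamma_1))t}$ and $\overline{\mathrm{e}^{2\mathrm{i}(L+1-\#(\gamma_2))t}}$ combine, the common term $L+1$ cancelling in the exponent.

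Next, because the sum over $\Gamma_L\times\Gamma_L$ is finite, I would interchange summation and integration in the definition (\ref{eq-time-average distribution}) of $\overline{P}(\sigma)$, reducing the computation to the elementary integral $\frac{1}{\pi}\int_0^{\pi}\mathrm{e}^{2\mathrm{i}(\#(\gamma_2)-\#(\gamma_1))t}\,dt$ for each pair $(\gamma_1,\gamma_2)$. The key observation — essentially the same one that produced the $\pi$-periodicity in Theorem~\ref{thr-periodicity} — is that $\#(\gamma_2)-\#(\gamma_1)$ is an integer, so $\mathrm{e}^{2\mathrm{i}(\#(\gamma_2)-\#(\gamma_1))\pi}=1$; hence this integral equals $1$ when $\#(\gamma_1)=\#(\gamma_2)$ and vanishes otherwise. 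Substituting back, only the pairs with $\#(\gamma_1)=\#(\gamma_2)$ survive, and the surviving sum is precisely (\ref{eq-3-17}).

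There is no genuine obstacle in this argument; the only point requiring a moment's attention is confirming that every exponential frequency that occurs is an even integer, so that the off-diagonal terms integrate to exactly zero over $[0,\pi]$ — this exactness is what forces $\overline{P}$ to collapse to the clean diagonal sum in (\ref{eq-3-17}) rather than to some messier expression. Once the orthogonality relation $\frac{1}{\pi}\int_0^{\pi}\mathrm{e}^{2\mathrm{i}mt}\,dt$ equals $1$ for $m=0$ and $0$ for $m\in\mathbb{Z}\setminus\{0\}$ is recorded, the remainder is pure bookkeeping.
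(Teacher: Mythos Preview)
Your proposal is correct and follows essentially the same route as the paper's proof: expand the squared modulus in (\ref{eq-3-13}) as a double sum over $\Gamma_L\times\Gamma_L$, then use the orthogonality $\frac{1}{\pi}\int_0^{\pi}\mathrm{e}^{2\mathrm{i}mt}\,dt=\delta_{m,0}$ for $m\in\mathbb{Z}$ to kill the off-diagonal pairs. The paper phrases this by first splitting $\Gamma_L\times\Gamma_L$ into $\Sigma_1=\{(\gamma_1,\gamma_2):\#(\gamma_1)=\#(\gamma_2)\}$ and its complement before integrating, but the substance is identical.
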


\begin{proof}
Let $\sigma\in \Gamma_L$ be given. Then, by writing
\begin{equation*}
\Sigma_1=\big\{(\gamma_1,\gamma_2) \in \Gamma_L\times \Gamma_L \mid \#(\gamma_1)=\#(\gamma_2)\big\}
\end{equation*}
and $\Sigma_2=(\Gamma_L\times \Gamma_L)\setminus \Sigma_1$, we can rewrite (\ref{eq-3-13}) as
\begin{equation*}
\begin{split}
  P_t(\sigma)
   &= \frac{1}{4^{L+1}}\sum_{\gamma_1 \in \Gamma_L} (-1)^{\#(\sigma\setminus \gamma_1)}\mathrm{e}^{-2\mathrm{i}(L+1-\#(\gamma_1))t}
      \sum_{\gamma_2 \in \Gamma_L} (-1)^{\#(\sigma\setminus \gamma_2)}\mathrm{e}^{2\mathrm{i}(L+1-\#(\gamma_2))t}\\
   &= \frac{1}{4^{L+1}}\Big[\sum_{(\gamma_1,\gamma_2)\in \Sigma_1}\!\!(-1)^{\#(\sigma\setminus \gamma_1)+ \#(\sigma\setminus \gamma_2)}
    + \sum_{(\gamma_1,\gamma_2)\in \Sigma_2}\!\! (-1)^{\#(\sigma\setminus \gamma_1)+ \#(\sigma\setminus \gamma_2)}
         \mathrm{e}^{2\mathrm{i}(\#(\gamma_1)- \#(\gamma_2))t}\Big],
\end{split}
\end{equation*}
where $t\in \mathbb{R}$. Clearly, $\int_0^{\pi}\mathrm{e}^{2\mathrm{i}(\#(\gamma_1)- \#(\gamma_2))t}dt= 0$ for all $(\gamma_1,\gamma_2)\in \Sigma_2$.
Thus
\begin{equation*}
  \overline{P}(\sigma)
  =\frac{1}{\pi} \int_0^{\pi}P_t(\sigma)dt
  = \frac{1}{4^{L+1}}\sum_{(\gamma_1,\gamma_2)\in \Sigma_1}\!\!(-1)^{\#(\sigma\setminus \gamma_1)+ \#(\sigma\setminus \gamma_2)},
\end{equation*}
which is exactly the same as (\ref{eq-3-17}).
\end{proof}

Two nodes $\sigma$, $\tau\in \Gamma_L$ are said to be symmetric with respect to $\mathbb{N}_L$
if they satisfy that
\begin{equation*}
\sigma \cap \tau = \emptyset\quad \mbox{and} \quad\sigma\cup \tau= \mathbb{N}_L,
\end{equation*}
which is equivalent to that $\tau = \mathbb{N}_L\setminus \sigma$.
A function $Q(\cdot)$ on $\Gamma_L$ is called symmetric if it satisfies the condition given below
\begin{equation}\label{eq}
  Q(\sigma) = Q(\sigma^c),\quad \sigma\in \Gamma_L,
\end{equation}
where $\sigma^c=\mathbb{N}_L\setminus \sigma$. The next theorem characterizes symmetry of the walk $\Delta_L$
through its time-average probability distribution.

\begin{theorem}\label{thr-3-10}
Let the initial state $\xi_0$ of the walk $\Delta_L$ be such that $\xi_0=Z_{\emptyset}$.
Then, its time-average probability distribution $\overline{P}(\cdot)$ is symmetric on $\Gamma_L$.
\end{theorem}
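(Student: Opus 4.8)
The plan is to work directly from the closed form for $\overline{P}$ established in Theorem~\ref{thr-3-9}, namely
\[
\overline{P}(\sigma) = \frac{1}{4^{L+1}}\sum_{(\gamma_1,\gamma_2)}(-1)^{\#(\sigma\setminus\gamma_1)+\#(\sigma\setminus\gamma_2)},
\]
the sum being over pairs $(\gamma_1,\gamma_2)\in\Gamma_L\times\Gamma_L$ with $\#(\gamma_1)=\#(\gamma_2)$, and to show that replacing $\sigma$ by $\sigma^c=\mathbb{N}_L\setminus\sigma$ leaves every summand unchanged. Since the index set is itself unaffected, this termwise invariance already gives $\overline{P}(\sigma^c)=\overline{P}(\sigma)$.

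The only identity needed is a bookkeeping fact about set differences: for any $\sigma,\gamma\in\Gamma_L$ one has $\sigma^c\setminus\gamma=\mathbb{N}_L\setminus(\sigma\cup\gamma)$, and since $\sigma\cup\gamma$ is the disjoint union of $\gamma$ and $\sigma\setminus\gamma$, this yields $\#(\sigma^c\setminus\gamma)=(L+1)-\#(\gamma)-\#(\sigma\setminus\gamma)$. I would check this set equality in one line and then substitute it for both $\gamma_1$ and $\gamma_2$: the exponent attached to the pair $(\gamma_1,\gamma_2)$ in the formula for $\overline{P}(\sigma^c)$ becomes $2(L+1)-\#(\gamma_1)-\#(\gamma_2)-\#(\sigma\setminus\gamma_1)-\#(\sigma\setminus\gamma_2)$.

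Here is the one place where the shape of the summation set matters: on that set $\#(\gamma_1)=\#(\gamma_2)$, so $\#(\gamma_1)+\#(\gamma_2)$ is even, and $2(L+1)$ is even; hence $(-1)$ raised to the above exponent equals $(-1)^{\#(\sigma\setminus\gamma_1)+\#(\sigma\setminus\gamma_2)}$ (using $(-1)^{-m}=(-1)^m$). Thus the $(\gamma_1,\gamma_2)$-summand is the same whether one computes $\overline{P}(\sigma)$ or $\overline{P}(\sigma^c)$, and summing over the index set completes the proof. A more conceptual packaging I might add as a remark is to regroup the double sum by the common cardinality $k=\#(\gamma_1)=\#(\gamma_2)$, so that $\overline{P}(\sigma)=4^{-(L+1)}\sum_{k=0}^{L+1}\big(\sum_{\#(\gamma)=k}(-1)^{\#(\sigma\setminus\gamma)}\big)^2$; the identity above shows that passing from $\sigma$ to $\sigma^c$ multiplies each inner sum by the global sign $(-1)^{L+1-k}$, which disappears upon squaring.

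I do not anticipate any real obstacle: the argument is essentially a parity computation. The only subtlety worth flagging is that the sign cancellation genuinely depends on the restriction $\#(\gamma_1)=\#(\gamma_2)$ in the definition of the summation set from Theorem~\ref{thr-3-9}; dropping that restriction would already break the termwise equality, so the proof must make visible use of it at the parity step.
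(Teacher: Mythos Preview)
Your proposal is correct and follows essentially the same route as the paper's own proof: both start from the formula of Theorem~\ref{thr-3-9}, use the identity $\sigma^c\setminus\gamma=\mathbb{N}_L\setminus(\sigma\cup\gamma)$ together with $\#(\sigma\cup\gamma)=\#(\gamma)+\#(\sigma\setminus\gamma)$, and then invoke the constraint $\#(\gamma_1)=\#(\gamma_2)$ to kill the parity of $\#(\gamma_1)+\#(\gamma_2)$ and obtain termwise equality of the summands. Your additional remark regrouping the sum as $\sum_k\big(\sum_{\#(\gamma)=k}(-1)^{\#(\sigma\setminus\gamma)}\big)^2$ is a nice alternative packaging not present in the paper, but the underlying parity argument is the same.
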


\begin{proof}
Let $\sigma\in \Gamma_L$ be given. Then, by Theorem~\ref{thr-3-9}, we have
\begin{equation*}
  \overline{P}(\sigma) = \sum_{(\gamma_1,\gamma_2)\in \Sigma_1}(-1)^{\#(\sigma\setminus \gamma_1) + \#(\sigma\setminus \gamma_2)},\quad
   \overline{P}(\sigma^c) = \sum_{(\gamma_1,\gamma_2)\in \Sigma_1}(-1)^{\#(\sigma^c\setminus \gamma_1) + \#(\sigma^c\setminus \gamma_2)},
\end{equation*}
where $\Sigma_1=\big\{(\gamma_1,\gamma_2)\in \Gamma_L\times \Gamma_L \mid \#(\gamma_1)=\#(\gamma_2)\big\}$.
On the other hand, for each $(\gamma_1,\gamma_2)\in \Sigma_1$, careful calculations give
\begin{equation*}
\begin{split}
(-1)^{\#(\sigma^c\setminus \gamma_1) + \#(\sigma^c\setminus \gamma_2)}
  &= (-1)^{\#(\mathbb{N}_L\setminus (\sigma\cup \gamma_1)) + \#(\mathbb{N}_L\setminus (\sigma\cup \gamma_2))}\\
  &= (-1)^{2(L+1)- \#(\sigma\cup \gamma_1)-\#(\sigma\cup \gamma_2)}\\
  &= (-1)^{\#(\sigma\cup \gamma_1)+\#(\sigma\cup \gamma_2)}\\
  &= (-1)^{ \#(\gamma_1)+ \#(\gamma_2) + \#(\sigma\setminus \gamma_1)+\#(\sigma\setminus \gamma_2)}\\
  &= (-1)^{\#(\sigma\setminus \gamma_1)+\#(\sigma\setminus \gamma_2)}.
\end{split}
\end{equation*}
Thus, combining all the above, we come to
\begin{equation*}
  \overline{P}(\sigma)
= \sum_{(\gamma_1,\gamma_2)\in \Sigma_1}(-1)^{\#(\sigma\setminus \gamma_1) + \#(\sigma\setminus \gamma_2)}
= \sum_{(\gamma_1,\gamma_2)\in \Sigma_1}(-1)^{\#(\sigma^c\setminus \gamma_1) + \#(\sigma^c\setminus \gamma_2)}
= \overline{P}(\sigma^c),
\end{equation*}
which, together with the arbitrariness of $\sigma\in \Gamma_L$, implies that $\overline{P}(\cdot)$ is symmetric.
\end{proof}

\begin{proposition}\label{prop-3-11}
Let the initial state $\xi_0$ of the walk $\Delta_L$ be such that $\xi_0=Z_{\emptyset}$.
Then
\begin{equation}\label{eq-3-19}
 \overline{P}(\emptyset)  = \overline{P}(\mathbb{N}_L) = \frac{(2L+1)!!}{(2L+2)!!},
\end{equation}
where $(2L+1)!!= (2L+1)\times(2L-1)\times\cdots \times1$ and $(2L+2)!!$ has the similar meaning.
\end{proposition}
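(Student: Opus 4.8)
The plan is to reduce everything to evaluating $\overline{P}(\emptyset)$ and then to compute that number by a counting argument. Since Theorem~\ref{thr-3-10} asserts that $\overline{P}$ is symmetric on $\Gamma_L$ and $\mathbb{N}_L = \emptyset^c$, we have $\overline{P}(\emptyset) = \overline{P}(\mathbb{N}_L)$ at once; hence it suffices to show $\overline{P}(\emptyset) = \frac{(2L+1)!!}{(2L+2)!!}$.

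To do this I would specialize the formula of Theorem~\ref{thr-3-9} to $\sigma = \emptyset$. The crucial observation is that $\emptyset \setminus \gamma = \emptyset$ for every $\gamma \in \Gamma_L$, so the exponent $\#(\emptyset\setminus\gamma_1) + \#(\emptyset\setminus\gamma_2)$ always vanishes and each summand equals $1$. Therefore
\[
 \overline{P}(\emptyset) = \frac{1}{4^{L+1}}\,\#\big\{(\gamma_1,\gamma_2)\in\Gamma_L\times\Gamma_L \mid \#(\gamma_1)=\#(\gamma_2)\big\}.
\]
Since $\mathbb{N}_L$ has exactly $L+1$ elements, the number of $\gamma\in\Gamma_L$ of a fixed cardinality $k$ is $\binom{L+1}{k}$, so the cardinality on the right is $\sum_{k=0}^{L+1}\binom{L+1}{k}^2$, which by the Vandermonde identity $\sum_{k}\binom{n}{k}^2 = \binom{2n}{n}$ (with $n = L+1$) equals $\binom{2L+2}{L+1}$.

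It then remains to convert $\binom{2L+2}{L+1}\big/4^{L+1}$ into the stated double-factorial ratio. Writing $(2L+2)! = (2L+2)!!\,(2L+1)!!$ together with $(2L+2)!! = 2^{L+1}(L+1)!$ gives $\binom{2L+2}{L+1} = (2L+2)!/((L+1)!)^2 = 2^{L+1}(2L+1)!!/(L+1)!$, and hence
\[
 \overline{P}(\emptyset) = \frac{2^{L+1}(2L+1)!!}{4^{L+1}(L+1)!} = \frac{(2L+1)!!}{2^{L+1}(L+1)!} = \frac{(2L+1)!!}{(2L+2)!!},
\]
which, combined with the reduction in the first step, completes the proof. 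No step here poses a genuine difficulty; the only points worth spelling out are the identification of $\sum_{k}\binom{L+1}{k}^2$ with a central binomial coefficient and the routine bookkeeping that turns that coefficient into the double-factorial ratio.
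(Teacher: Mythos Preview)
Your proof is correct and follows essentially the same approach as the paper: specialize Theorem~\ref{thr-3-9} at $\sigma=\emptyset$ so that every summand is $1$, count the diagonal set via $\sum_k\binom{L+1}{k}^2=\binom{2L+2}{L+1}$, and invoke Theorem~\ref{thr-3-10} for the symmetry $\overline{P}(\emptyset)=\overline{P}(\mathbb{N}_L)$. The only cosmetic difference is that you apply the symmetry reduction first rather than last, and you spell out the conversion $\binom{2L+2}{L+1}/4^{L+1}=(2L+1)!!/(2L+2)!!$ in more detail than the paper does.
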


\begin{proof}
Write $\Sigma_1= \big\{(\gamma_1,\gamma_2)\in \Gamma_L\times \Gamma_L \mid \#(\gamma_1)= \#(\gamma_2)\big\}$.
Then, a careful calculation gives
\begin{equation}\label{eq-3-20}
\#(\Sigma_1)= \sum_{k=0}^{L+1} \binom{L+1}{k}\binom{L+1}{k}
= \binom{2L+2}{L+1}.
\end{equation}
Now, by using Theorem~\ref{thr-3-9}, we have
\begin{equation*}
  \overline{P}(\emptyset)
   = \frac{1}{4^{L+1}} \sum_{(\gamma_1,\gamma_2)\in \Sigma_1}(-1)^{\#(\emptyset\setminus \gamma_1)+\#(\emptyset\setminus \gamma_2)}
   = \frac{1}{4^{L+1}} \sum_{(\gamma_1,\gamma_2)\in \Sigma_1}(-1)^0
   = \frac{\#(\Sigma_1)}{4^{L+1}},
\end{equation*}
which together with (\ref{eq-3-20}) yields
\begin{equation*}
  \overline{P}(\emptyset) = \frac{1}{4^{L+1}}\binom{2L+2}{L+1} = \frac{(2L+1)!!}{(2L+2)!!}.
\end{equation*}
Finally, by Theorem~\ref{thr-3-10}, we know that $\overline{P}(\mathbb{N}_L) =\overline{P}(\emptyset)$.
\end{proof}

\subsection{Graph-theoretic interpretation}\label{subsec-3-3}

In the present subsection, we justify our model by giving a graph-theoretic interpretation to the operator $\Delta_L$
as well as to the model itself.

Consider the set $\Gamma_L$, which is the power set of $\mathbb{N}_L\equiv \{0,1,\cdots,L\}$.
Recall that, for $\sigma$, $\tau\in \Gamma_L$, $\sigma\bigtriangleup\tau$ means the symmetric difference between them.

\begin{definition}
Let $\sigma$, $\tau\in \Gamma_L$. $\sigma$ is said to be adjacent to $\tau$ if $\#(\sigma\bigtriangleup\tau) =1$. In that case,
we write $\sigma\sim \tau$.
\end{definition}

Clearly, $(\Gamma_L,\sim)$ forms a graph, where the vertex set is $\Gamma_L$ itself, while the edge set $E_L$ is given by
\begin{equation}\label{eq}
E_L= \big\{\{\sigma,\tau\} \mid \sigma,\tau\in \Gamma_L,\, \sigma\sim \tau\big\}.
\end{equation}
In the following, we simply call $(\Gamma_L,\sim)$ the graph $\Gamma_L$.

\begin{proposition}\label{prop-3-12}
Let $\sigma$, $\tau$ be vertices of the graph $\Gamma_L$. Then $\sigma\sim \tau$ if and only if there exits a unique $k\in \mathbb{N}_L$ such that
\begin{equation}\label{eq-3-22}
  \tau =\left\{
          \begin{array}{ll}
            \sigma\setminus k, & \hbox{$k\in \sigma$;} \\
            \sigma\cup k, & \hbox{$k\notin \sigma$,}
          \end{array}
        \right.
\end{equation}
where $\sigma\setminus k= \sigma\setminus \{k\}$ and $\sigma\cup k=\sigma\cup \{k\}$.
\end{proposition}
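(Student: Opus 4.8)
The statement to prove is Proposition~\ref{prop-3-12}: for vertices $\sigma$, $\tau$ of the graph $\Gamma_L$, $\sigma\sim\tau$ iff there is a unique $k\in\mathbb{N}_L$ realizing the relation \eqref{eq-3-22}. The plan is to unwind the definition $\sigma\sim\tau \iff \#(\sigma\bigtriangleup\tau)=1$ and translate the cardinality-one condition on the symmetric difference into the existence of a single distinguishing element. Since everything here is elementary set theory on finite subsets of $\mathbb{N}_L$, the proof will be short and the real content is just careful bookkeeping.

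First I would prove the forward direction. Suppose $\sigma\sim\tau$, i.e.\ $\#(\sigma\bigtriangleup\tau)=1$, so $\sigma\bigtriangleup\tau=\{k\}$ for a \emph{unique} $k$, and necessarily $k\in\mathbb{N}_L$ since $\sigma\bigtriangleup\tau\subset\mathbb{N}_L$. Now split into the two cases according to whether $k\in\sigma$ or $k\notin\sigma$. If $k\in\sigma$: since $k\in\sigma\bigtriangleup\tau=(\sigma\setminus\tau)\cup(\tau\setminus\sigma)$, we must have $k\in\sigma\setminus\tau$, so $k\notin\tau$; and for every $j\neq k$ in $\mathbb{N}_L$, $j\notin\sigma\bigtriangleup\tau$, which means $j\in\sigma \iff j\in\tau$. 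Hence $\tau$ agrees with $\sigma$ off $k$ and omits $k$, i.e.\ $\tau=\sigma\setminus k$. The case $k\notin\sigma$ is symmetric: then $k\in\tau\setminus\sigma$, so $k\in\tau$, and again $\tau$ and $\sigma$ agree off $k$, giving $\tau=\sigma\cup k$. Uniqueness of $k$ is immediate because $\sigma\bigtriangleup\tau$ is determined by $\sigma,\tau$ and here has exactly one element.

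For the converse, suppose such a $k\in\mathbb{N}_L$ exists. If $k\in\sigma$ and $\tau=\sigma\setminus k$, then directly $\sigma\setminus\tau=\{k\}$ and $\tau\setminus\sigma=\emptyset$, so $\sigma\bigtriangleup\tau=\{k\}$ and $\#(\sigma\bigtriangleup\tau)=1$; similarly if $k\notin\sigma$ and $\tau=\sigma\cup k$, then $\sigma\setminus\tau=\emptyset$ and $\tau\setminus\sigma=\{k\}$, again giving $\#(\sigma\bigtriangleup\tau)=1$. In either case $\sigma\sim\tau$ by definition.

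There is no real obstacle here — the only thing to be slightly careful about is the logical structure of the "unique $k$" clause (the $k$ in the forward direction is forced to be the single element of $\sigma\bigtriangleup\tau$, and the two cases of \eqref{eq-3-22} are mutually exclusive and jointly exhaustive depending on membership $k\in\sigma$), so I would phrase the write-up to make clear that the dichotomy in \eqref{eq-3-22} is driven entirely by whether $k$ lies in $\sigma$. One could alternatively note that this is just the statement that the Boolean cube $(\Gamma_L,\bigtriangleup)$ with the Hamming-distance-one adjacency coincides with the hypercube graph on $L+1$ coordinates, flipping coordinate $k$; but the direct set-theoretic argument above is cleanest and self-contained.
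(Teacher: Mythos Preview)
Your proof is correct and follows essentially the same approach as the paper's: both arguments reduce $\sigma\sim\tau$ to $\sigma\bigtriangleup\tau=\{k\}$ and then split into the two cases $k\in\sigma\setminus\tau$ or $k\in\tau\setminus\sigma$ to obtain $\tau=\sigma\setminus k$ or $\tau=\sigma\cup k$, with uniqueness coming from $\sigma\bigtriangleup\tau$ being a singleton. The only cosmetic difference is that the paper phrases the case split via $\#(\sigma\setminus\tau)+\#(\tau\setminus\sigma)=1$, whereas you split on membership $k\in\sigma$ versus $k\notin\sigma$; these are equivalent and equally short.
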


\begin{proof}
The ``if'' part is easy to verify. Next, we verify the ``only if'' part. Suppose $\sigma\sim \tau$. Then, by the definition,
\begin{equation*}
\#(\sigma\setminus \tau)+\#(\tau\setminus\sigma)= \#(\sigma\bigtriangleup\tau) =1,
\end{equation*}
which implies that $\#(\sigma\setminus \tau)=1$ with $\#(\tau\setminus\sigma)=0$ or $\#(\sigma\setminus \tau)=0$ with $\#(\tau\setminus\sigma)=1$,
which implies that there exists $k\in \mathbb{N}_L$ such that
\begin{equation*}
\sigma\setminus \tau=\{k\}\ \ \mbox{with}\ \ \tau \subset \sigma\quad
\mbox{or}\quad
  \sigma\subset \tau\ \ \mbox{with}\ \ \tau\setminus \sigma =\{k\},
\end{equation*}
which is equivalent to (\ref{eq-3-22}). The uniqueness of the above $k$ follows from $\sigma\bigtriangleup\tau=\{k\}$.
\end{proof}

The above proposition characterizes the adjacency relation in the graph $\Gamma_L$.
The next proposition further shows that the graph $\Gamma_L$ is regular and its degree is $L+1$.

\begin{proposition}
Let $\sigma\in\Gamma_L$ and write $\mathcal{N}(\sigma)=\{\tau \in \Gamma_L \mid \tau\sim \sigma\}$. Then it holds that
\begin{equation}\label{eq-3-23}
  \mathcal{N}(\sigma)=\big\{\sigma\setminus k \mid k\in \sigma\big\}\cup \big\{\sigma\cup k \mid k\in \mathbb{N}_L\setminus \sigma\big\}.
\end{equation}
In particular, $\#(\mathcal{N}(\sigma))=L+1$, namely the degree of the vertex $\sigma$ is $L+1$.
\end{proposition}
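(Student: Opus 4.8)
The plan is to prove the two claims in sequence: first the set equality \eqref{eq-3-23} characterizing $\mathcal{N}(\sigma)$, then the cardinality count $\#(\mathcal{N}(\sigma)) = L+1$, which gives regularity. For the set equality, I would invoke Proposition~\ref{prop-3-12} directly: by definition $\mathcal{N}(\sigma) = \{\tau \in \Gamma_L \mid \sigma \sim \tau\}$, and Proposition~\ref{prop-3-12} says $\sigma \sim \tau$ holds precisely when there is a (unique) $k \in \mathbb{N}_L$ with $\tau = \sigma \setminus k$ in case $k \in \sigma$, and $\tau = \sigma \cup k$ in case $k \notin \sigma$. Collecting these $\tau$ over all admissible $k$ gives exactly the union on the right-hand side of \eqref{eq-3-23}: the $k \in \sigma$ contribute the nodes $\sigma \setminus k$, and the $k \in \mathbb{N}_L \setminus \sigma$ contribute the nodes $\sigma \cup k$. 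So the first half is essentially a restatement of the previous proposition.

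For the cardinality, I would observe that the two sets in the union are disjoint: a node of the form $\sigma \setminus k$ (with $k \in \sigma$) is a proper subset of $\sigma$ and hence has cardinality $\#(\sigma) - 1$, whereas a node $\sigma \cup k$ (with $k \notin \sigma$) properly contains $\sigma$ and has cardinality $\#(\sigma) + 1$; these cardinalities differ, so no node lies in both sets. Moreover, within each of the two sets the map $k \mapsto \sigma \setminus k$ (resp. $k \mapsto \sigma \cup k$) is injective, since $k$ is recovered as the unique element of the symmetric difference $\sigma \bigtriangleup \tau$. Hence $\#\{\sigma \setminus k \mid k \in \sigma\} = \#(\sigma)$ and $\#\{\sigma \cup k \mid k \in \mathbb{N}_L \setminus \sigma\} = \#(\mathbb{N}_L \setminus \sigma) = (L+1) - \#(\sigma)$, and adding the two disjoint contributions yields $\#(\mathcal{N}(\sigma)) = \#(\sigma) + \big((L+1) - \#(\sigma)\big) = L+1$. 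Since $\sigma \in \Gamma_L$ was arbitrary, every vertex has degree $L+1$, so the graph $\Gamma_L$ is regular of degree $L+1$.

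There is essentially no obstacle here; the only point requiring a moment's care is the disjointness of the two families in the union (and the injectivity of each parametrization by $k$), which is why I would spell out the cardinality comparison $\#(\sigma) \pm 1$ explicitly rather than leave it implicit. Everything else is a direct consequence of Proposition~\ref{prop-3-12} and the uniqueness of the element $k$ with $\sigma \bigtriangleup \tau = \{k\}$ stated there.
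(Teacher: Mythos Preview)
Your proposal is correct and follows essentially the same route as the paper: both derive the set equality \eqref{eq-3-23} directly from Proposition~\ref{prop-3-12}. The paper's proof differs only cosmetically---it separates out the extreme cases $\sigma=\emptyset$ and $\sigma=\mathbb{N}_L$ before treating the general case via double inclusion, and it does not spell out the cardinality count $\#(\mathcal{N}(\sigma))=L+1$ at all---so your explicit disjointness-and-injectivity argument for the degree is in fact more complete than what appears in the paper.
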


\begin{proof}
Obviously, (\ref{eq-3-23}) holds for $\sigma=\emptyset$ or $\sigma=\mathbb{N}_L$. Next, we suppose $\sigma\neq \emptyset$ and $\sigma\neq \mathbb{N}_L$.
In this case, by Proposition~\ref{prop-3-12}, we immediately have
\begin{equation*}
  \mathcal{N}(\sigma)\supset \big\{\sigma\setminus k \mid k\in \sigma\big\}\cup \big\{\sigma\cup k \mid k\in \mathbb{N}_L\setminus \sigma\big\}.
\end{equation*}
On the other hand, if $\tau\in \mathcal{N}(\sigma)$, then again by Proposition~\ref{prop-3-12}, there exits $k\in \sigma$ such that $\tau=\sigma\setminus k$
or there exists $k\in \mathbb{N}_L\setminus \sigma$ such that $\tau=\sigma\cup k$, which implies that
\begin{equation*}
\tau\in \big\{\sigma\setminus k \mid k\in \sigma\big\}\cup \big\{\sigma\cup k \mid k\in \mathbb{N}_L\setminus \sigma\big\}.
\end{equation*}
Thus $\mathcal{N}(\sigma)\subset \big\{\sigma\setminus k \mid k\in \sigma\big\}\cup \big\{\sigma\cup k \mid k\in \mathbb{N}_L\setminus \sigma\big\}$.
In summary, (\ref{eq-3-23}) remains true for the case of $\sigma\neq \emptyset$ and $\sigma\neq \mathbb{N}_L$.
\end{proof}

Let $C(\Gamma_L)$ be the set of all functions $f\colon \Gamma_L\rightarrow \mathbb{C}$, where $\mathbb{C}$ denotes the set of all complex numbers.
It is known that $C(\Gamma_L)$, together with the usual addition and scalar multiplication as well as the usual inner product
$\langle\cdot,\cdot\rangle_{C(\Gamma_L)}$, forms a complex Hilbert space.
Moreover, $C(\Gamma_L)$ has an orthonormal basis of the form $\{e_{\sigma} \mid \sigma \in \Gamma_L\}$, where $e_{\sigma}$ is the function on $\Gamma_L$ given by
\begin{equation*}
  e_{\sigma}(\gamma)=
    \left\{
     \begin{array}{ll}
       1, & \hbox{$\gamma=\sigma$;} \\
       0, & \hbox{$\gamma\neq \sigma$, $\gamma\in \Gamma_L$.}
     \end{array}
   \right.
\end{equation*}
Thus, as a complex Hilbert space, $C(\Gamma_L)$ is of dimension $2^{L+1}$.
Recall that $\mathfrak{h}_L$ is also a complex Hilbert space of dimension $2^{L+1}$. Therefore, there exists a unitary isomorphism
$\mathsf{F}\colon C(\Gamma_L)\rightarrow \mathfrak{h}_L$ such that
\begin{equation*}
  \mathsf{F}e_{\sigma} = Z_{\sigma},\quad \sigma \in \Gamma_L,
\end{equation*}
where $Z_{\sigma}$ is the basis vector of the canonical ONB for $\mathfrak{h}_L$.

\begin{definition}\label{def-graph-Laplacian}
The Laplacian $\widetilde{\Delta}_L$ of the graph $\Gamma_L$ is the operator on $C(\Gamma_L)$ defined by
\begin{equation}\label{eq-graph-Laplacian}
  [\widetilde{\Delta}_Lf](\sigma)
= \sum_{\tau\in \mathcal{N}(\sigma)} [f(\sigma)-f(\tau)],\quad \sigma \in \Gamma_L,
\end{equation}
where $f\in C(\Gamma_L)$.
\end{definition}

According to the general theory of spectral graph (see, e.g., \cite{obata}), the Laplacian $\widetilde{\Delta}_L$
is a self-adjoint operator on $C(\Gamma_L)$.
Carefully checking (\ref{eq-graph-Laplacian}) and (\ref{eq-3-23}), we see that $\widetilde{\Delta}_L$ can be equivalently defined as
\begin{equation}\label{eq-graph-Laplacian-2}
  [\widetilde{\Delta}_Lf](\sigma)
= (L+1)f(\sigma)- \sum_{k=0}^L \big[\mathbf{1}_{\sigma}(k)f(\sigma\setminus k) + (1-\mathbf{1}_{\sigma}(k))f(\sigma\cup k)\big],\quad \sigma \in \Gamma_L,
\end{equation}
where $f\in C(\Gamma_L)$.
Next, we examine links between the Laplacian $\widetilde{\Delta}_L$ of
the graph $\Gamma_L$ and the Laplacian $\Delta_L$ on $\mathfrak{h}_L$.

\begin{theorem}\label{thr-3-14}
For each $k\in \mathbb{N}_L$, the operator $\Xi_k$ on $\mathfrak{h}_L$ has a representation on $C(\Gamma_L)$ of the following form
\begin{equation}\label{eq-3-26}
  [\mathsf{F}^{-1}\Xi_k\mathsf{F}f](\sigma)
= \mathbf{1}_{\sigma}(k)f(\sigma\setminus k) + (1-\mathbf{1}_{\sigma}(k))f(\sigma\cup k),\quad \sigma\in \Gamma_L,
\end{equation}
where $f\in C(\Gamma_L)$.
\end{theorem}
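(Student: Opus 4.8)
The plan is to unwind the definition of the conjugated operator $\mathsf{F}^{-1}\Xi_k\mathsf{F}$ on the canonical basis $\{e_\sigma \mid \sigma\in\Gamma_L\}$ of $C(\Gamma_L)$ and then extend by linearity. First I would recall that $\mathsf{F}\colon C(\Gamma_L)\to\mathfrak{h}_L$ is the unitary isomorphism with $\mathsf{F}e_\sigma = Z_\sigma$ for all $\sigma\in\Gamma_L$, so that $\mathsf{F}^{-1}Z_\sigma = e_\sigma$. Applying (\ref{eq-2-6}) of Lemma~\ref{lem-2-1}, for each fixed $k\in\mathbb{N}_L$ and each $\sigma\in\Gamma_L$ one has
\[
\mathsf{F}^{-1}\Xi_k\mathsf{F}e_\sigma
= \mathsf{F}^{-1}\Xi_k Z_\sigma
= \mathsf{F}^{-1}\big[\mathbf{1}_\sigma(k)Z_{\sigma\setminus k} + (1-\mathbf{1}_\sigma(k))Z_{\sigma\cup k}\big]
= \mathbf{1}_\sigma(k)e_{\sigma\setminus k} + (1-\mathbf{1}_\sigma(k))e_{\sigma\cup k},
\]
where Lemma~\ref{lem-2-3} guarantees $\sigma\setminus k,\,\sigma\cup k\in\Gamma_L$, so the right-hand side genuinely lies in $C(\Gamma_L)$.

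Next I would observe that the right-hand side above is exactly $e_{\sigma\bigtriangleup\{k\}}$, since $\sigma\bigtriangleup\{k\}=\sigma\setminus k$ when $k\in\sigma$ and $\sigma\bigtriangleup\{k\}=\sigma\cup k$ when $k\notin\sigma$; moreover $\sigma\mapsto\sigma\bigtriangleup\{k\}$ is an involution of $\Gamma_L$. Writing an arbitrary $f\in C(\Gamma_L)$ as $f=\sum_{\sigma\in\Gamma_L}f(\sigma)e_\sigma$ and using linearity of $\mathsf{F}^{-1}\Xi_k\mathsf{F}$, I get $\mathsf{F}^{-1}\Xi_k\mathsf{F}f=\sum_{\sigma\in\Gamma_L}f(\sigma)e_{\sigma\bigtriangleup\{k\}}$. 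Evaluating at a node $\tau\in\Gamma_L$, only the summand with $\sigma\bigtriangleup\{k\}=\tau$, i.e.\ $\sigma=\tau\bigtriangleup\{k\}$, survives, so $[\mathsf{F}^{-1}\Xi_k\mathsf{F}f](\tau)=f(\tau\bigtriangleup\{k\})$. Splitting into the cases $k\in\tau$ and $k\notin\tau$ rewrites this as $\mathbf{1}_\tau(k)f(\tau\setminus k)+(1-\mathbf{1}_\tau(k))f(\tau\cup k)$, which is precisely (\ref{eq-3-26}) after renaming $\tau$ as $\sigma$.

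There is essentially no obstacle here; the only point needing a little care is the bookkeeping in the last step — identifying which index $\sigma$ contributes to the coefficient of $e_\tau$ — and this is exactly where the involution property of $\sigma\mapsto\sigma\bigtriangleup\{k\}$ is used. If one prefers to bypass the sum manipulation altogether, an equivalent route is to verify (\ref{eq-3-26}) directly on each basis function $f=e_\gamma$, $\gamma\in\Gamma_L$ — for which both sides are readily seen to equal $e_{\gamma\bigtriangleup\{k\}}$ — and then conclude by linearity, since $\{e_\gamma\mid\gamma\in\Gamma_L\}$ spans $C(\Gamma_L)$.
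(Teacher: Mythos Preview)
Your proof is correct and follows essentially the same approach as the paper: compute $\mathsf{F}^{-1}\Xi_k\mathsf{F}$ on the basis vectors $e_\sigma$ via (\ref{eq-2-6}), then extend by linearity. Your use of the involution $\sigma\mapsto\sigma\bigtriangleup\{k\}$ makes the index bookkeeping in the evaluation step more transparent than the paper's version, which simply asserts the pointwise identity $[\mathsf{F}^{-1}\Xi_k\mathsf{F}e_\tau](\sigma)=\mathbf{1}_\sigma(k)e_\tau(\sigma\setminus k)+(1-\mathbf{1}_\sigma(k))e_\tau(\sigma\cup k)$ without spelling out why; but the underlying argument is the same.
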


\begin{proof}
Since $\{e_{\tau}\mid \tau\in \Gamma_L\}$ is the orthonormal basis for $V(\Gamma_L)$, it suffices to show that for each $\tau\in \Gamma_L$ one has
\begin{equation*}
  [\mathsf{F}^{-1}\Xi_k\mathsf{F}e_{\tau}](\sigma)
  = \mathbf{1}_{\sigma}(k)e_{\tau}(\sigma\setminus k) + (1-\mathbf{1}_{\sigma}(k))e_{\tau}(\sigma\cup k),\quad \sigma\in \Gamma_L.
\end{equation*}
Let $\tau\in \Gamma_L$ be given. Then, with $\mathsf{F}e_{\tau}=Z_{\tau}$, we find
\begin{equation*}
\begin{split}
  \mathsf{F}^{-1}\Xi_k\mathsf{F}e_{\tau}= \mathsf{F}^{-1}\Xi_kZ_{\tau}
  &= \mathsf{F}^{-1}[\mathbf{1}_{\tau}(k)Z_{\tau\setminus k} + (1-\mathbf{1}_{\tau}(k))Z_{\tau\cup k}]\\
  &= \mathbf{1}_{\tau}(k)e_{\tau\setminus k} + (1-\mathbf{1}_{\tau}(k))e_{\tau\cup k},
\end{split}
\end{equation*}
which implies that
\begin{equation*}
  [\mathsf{F}^{-1}\Xi_k\mathsf{F}e_{\tau}](\sigma)
  =  \mathbf{1}_{\sigma}(k)e_{\tau}(\sigma\setminus k)+(1-\mathbf{1}_{\sigma}(k))e_{\tau}(\sigma\cup k),\quad \sigma \in \Gamma_L.
\end{equation*}
This completes the proof.
\end{proof}

The following theorem shows that the operator $\Delta_L$ on $\mathfrak{h}_L$ is unitarily equivalent to the  Laplacian $\widetilde{\Delta}_L$ of the graph
$\Gamma_L$.

\begin{theorem}\label{thr-interpret-Laplace}
The operator $\Delta_L$ on $\mathfrak{h}_L$ has a representation on $C(\Gamma_L)$ in the following manner
\begin{equation}\label{eq}
  \mathsf{F}^{-1}\Delta_L\mathsf{F} = \widetilde{\Delta}_L,
\end{equation}
where $\widetilde{\Delta}_L$ is the Laplacian of the graph $\Gamma_L$ (see Definition~\ref{def-graph-Laplacian} for details).
\end{theorem}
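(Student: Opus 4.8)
The plan is to reduce the statement to the additive decomposition $\Delta_L = \sum_{k=0}^L(I-\Xi_k)$ and then apply Theorem~\ref{thr-3-14} term by term. First I would note that conjugation by the unitary isomorphism $\mathsf{F}$ is linear and that $\mathsf{F}^{-1}I\mathsf{F}=I$, now read as the identity operator on $C(\Gamma_L)$; since all operators in sight are bounded on the finite-dimensional space $C(\Gamma_L)$, no convergence subtlety arises, and one gets
\begin{equation*}
\mathsf{F}^{-1}\Delta_L\mathsf{F} = \sum_{k=0}^L\big(I-\mathsf{F}^{-1}\Xi_k\mathsf{F}\big).
\end{equation*}

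Next, fixing an arbitrary $f\in C(\Gamma_L)$ and an arbitrary node $\sigma\in\Gamma_L$, I would evaluate the above identity pointwise at $\sigma$ and substitute the formula (\ref{eq-3-26}) from Theorem~\ref{thr-3-14} for each $[\mathsf{F}^{-1}\Xi_k\mathsf{F}f](\sigma)$. This produces
\begin{equation*}
[\mathsf{F}^{-1}\Delta_L\mathsf{F}f](\sigma)
= \sum_{k=0}^L\Big(f(\sigma)-\mathbf{1}_{\sigma}(k)f(\sigma\setminus k)-(1-\mathbf{1}_{\sigma}(k))f(\sigma\cup k)\Big)
= (L+1)f(\sigma)-\sum_{k=0}^L\big[\mathbf{1}_{\sigma}(k)f(\sigma\setminus k)+(1-\mathbf{1}_{\sigma}(k))f(\sigma\cup k)\big].
\end{equation*}
The right-hand side is precisely the alternative expression (\ref{eq-graph-Laplacian-2}) for $\widetilde{\Delta}_L$, which the excerpt already established to coincide with Definition~\ref{def-graph-Laplacian} via the neighbourhood description (\ref{eq-3-23}). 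Hence $[\mathsf{F}^{-1}\Delta_L\mathsf{F}f](\sigma)=[\widetilde{\Delta}_Lf](\sigma)$ for every $\sigma\in\Gamma_L$, and since $f\in C(\Gamma_L)$ was arbitrary, $\mathsf{F}^{-1}\Delta_L\mathsf{F}=\widetilde{\Delta}_L$ as operators on $C(\Gamma_L)$.

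As for difficulty: there is essentially no obstacle here — the result is bookkeeping once Theorem~\ref{thr-3-14} is available. The only point requiring a bit of care is confirming that the two forms (\ref{eq-graph-Laplacian}) and (\ref{eq-graph-Laplacian-2}) of the graph Laplacian genuinely agree, i.e.\ that summing $f(\sigma)-f(\tau)$ over $\tau\in\mathcal{N}(\sigma)$ reproduces the $k$-indexed sum; this relies on the regularity of the graph $\Gamma_L$ (degree $L+1$) and on the bijection between $\mathcal{N}(\sigma)$ and $\mathbb{N}_L$ provided by Proposition~\ref{prop-3-12}. If one prefers not to lean on (\ref{eq-graph-Laplacian-2}), an equivalent route is to check $\mathsf{F}^{-1}\Delta_L\mathsf{F}e_{\tau}=\widetilde{\Delta}_Le_{\tau}$ directly on the basis vectors $e_{\tau}$, $\tau\in\Gamma_L$, which is exactly the strategy underlying Theorem~\ref{thr-3-14}.
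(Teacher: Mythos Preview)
Your proposal is correct and follows essentially the same route as the paper: conjugate the decomposition $\Delta_L=\sum_{k=0}^L(I-\Xi_k)$ by $\mathsf{F}$, invoke Theorem~\ref{thr-3-14} for each term, and identify the resulting pointwise formula with the alternative expression (\ref{eq-graph-Laplacian-2}) for $\widetilde{\Delta}_L$. The paper's proof is line-for-line the same, merely writing $(L+1)I-\sum_k\mathsf{F}^{-1}\Xi_k\mathsf{F}$ where you write $\sum_k(I-\mathsf{F}^{-1}\Xi_k\mathsf{F})$.
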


\begin{proof}
Let $f\in C(\Gamma_L)$ be given. By the definition of the operator $\Delta_L$ on $\mathfrak{h}_L$, we have
\begin{equation*}
  \mathsf{F}^{-1}\Delta_L\mathsf{F}
= (L+1)I - \sum_{k=0}^L \mathsf{F}^{-1}\Xi_k\mathsf{F},
\end{equation*}
where $I$ means the identity operator on $C(\Gamma_L)$. Thus, by suing Theorem~\ref{thr-3-14}, we get
\begin{equation*}
\begin{split}
  [\mathsf{F}^{-1}\Delta_L\mathsf{F}f](\sigma)
    &= (L+1)f(\sigma) - \sum_{k=0}^L [\mathsf{F}^{-1}\Xi_k\mathsf{F}f](\sigma)\\
    &= (L+1)f(\sigma) - \sum_{k=0}^L[\mathbf{1}_{\sigma}(k)f(\sigma\setminus k) + (1-\mathbf{1}_{\sigma}(k))f(\sigma\cup k)], \quad \sigma\in \Gamma_L,
\end{split}
\end{equation*}
which together with (\ref{eq-graph-Laplacian-2}) yields
\begin{equation*}
  [\mathsf{F}^{-1}\Delta_L\mathsf{F}f](\sigma)
= [\widetilde{\Delta}_Lf](\sigma),\quad \sigma\in \Gamma_L,
\end{equation*}
which implies that $\mathsf{F}^{-1}\Delta_L\mathsf{F}f=\widetilde{\Delta}_Lf$. It then follows from the arbitrariness of $f\in C(\Gamma_L)$ that
$\mathsf{F}^{-1}\Delta_L\mathsf{F}=\widetilde{\Delta}_L$.
\end{proof}

\begin{remark}\label{rem-3-1}
According to Theorem~\ref{thr-interpret-Laplace}, the operator $\Delta_L$ on $\mathfrak{h}_L$ is unitarily equivalent to the Laplacian $\widetilde{\Delta}_L$ of
the graph $\Gamma_L$. Thus, the Schrodinger equation with $\Delta_L$ as the Hamiltonian is also unitarily equivalent to
that with $\widetilde{\Delta}_L$ as the Hamiltonian.
In other words, our abstract model of CTQW described in Definition~\ref{def-3-1} can be actually viewed as a model of CTQW on the graph $\Gamma_L$.
\end{remark}

\subsection{Perfect state transfer}\label{subsec-3-4}

In the final subsection, we investigate perfect state transfer (PST) in our abstract model of CTQW, namely the CTQW governed by $\Delta_L$.
For brevity, we simply call it the walk $\Delta_L$ below.

Let us first make clear the precise meaning of PST in the walk $\Delta_L$.
For nodes $\sigma$, $\tau\in \Gamma_L$ and $t_0\in \mathbb{R}$,
the walk $\Delta_L$ is said to have PST from $\sigma$ to $\tau$ at time $t=t_0$ if it holds that
\begin{equation}\label{eq-PST}
  \big|\langle \mathrm{e}^{\mathrm{i}t_0\Delta_L} Z_{\sigma}, Z_{\tau}\rangle\big|=1.
\end{equation}

\begin{theorem}\label{thr-3-16}
Let $\sigma$, $\tau \in \Gamma_L$ be given. Then, the following tree conditions are equivalent mutually:
\begin{enumerate}
  \item[(1)]\  $\mathrm{e}^{\mathrm{i}\frac{\pi}{2}\Delta_L} Z_{\sigma} = Z_{\tau}$;
  \item[(2)]\  $\#(\sigma\bigtriangleup \tau)=L+1$;
  \item[(3)]\  $\tau = \mathbb{N}_L\setminus \sigma$.
\end{enumerate}
In that case, the walk $\Delta_L$ has PST from $\sigma$ to $\tau$ at time $t=\frac{\pi}{2}$.
\end{theorem}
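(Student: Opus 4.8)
The plan is to reduce the whole statement to one clean operator identity, namely that the evolution operator at time $\tfrac{\pi}{2}$ is exactly the product of all the canonical unitary involutions indexed by $\mathbb{N}_L$:
\begin{equation*}
  \mathrm{e}^{\mathrm{i}\frac{\pi}{2}\Delta_L} = \Xi_{\mathbb{N}_L} = \prod_{k=0}^{L}\Xi_k .
\end{equation*}
To establish this I would use that the operators $I-\Xi_0,\dots,I-\Xi_L$ pairwise commute (by the commutativity \eqref{eq-2-7}), so that $\mathrm{e}^{\mathrm{i}\frac{\pi}{2}\Delta_L}=\prod_{k=0}^L \mathrm{e}^{\mathrm{i}\frac{\pi}{2}(I-\Xi_k)}$; then, since $\Xi_k^2=I$ by Lemma~\ref{lem-2-2}, the power series for the exponential collapses to $\mathrm{e}^{-\mathrm{i}\frac{\pi}{2}\Xi_k}=\cos\tfrac{\pi}{2}\,I-\mathrm{i}\sin\tfrac{\pi}{2}\,\Xi_k=-\mathrm{i}\Xi_k$, whence $\mathrm{e}^{\mathrm{i}\frac{\pi}{2}(I-\Xi_k)}=\mathrm{e}^{\mathrm{i}\pi/2}\,(-\mathrm{i}\Xi_k)=\Xi_k$. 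Multiplying the factors and recalling the definition \eqref{eq-2-8} of $\Xi_\sigma$ gives the displayed identity. (Alternatively one can reach the same conclusion from the diagonal representation of $\Delta_L$ in Theorem~\ref{thr-3-4}, evaluating $\mathrm{e}^{\mathrm{i}\frac{\pi}{2}\Delta_L}$ on $Z_\sigma$ and using Proposition~\ref{prop-3-6}; the only nuisance there is to verify the parity identity $L+1-\#(\gamma)+\#(\sigma\setminus\gamma)\equiv \#\big((\mathbb{N}_L\setminus\sigma)\setminus\gamma\big)\pmod 2$ for every $\gamma\in\Gamma_L$.)

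Next I would record the action on basis vectors. By \eqref{eq-2-6}, each $\Xi_k$ sends $Z_\gamma$ to $Z_{\gamma\bigtriangleup\{k\}}$, and since the $\Xi_k$ commute and $\bigtriangleup$ is associative and commutative, $\Xi_{\mathbb{N}_L}Z_\sigma=Z_{\sigma\bigtriangleup\mathbb{N}_L}=Z_{\mathbb{N}_L\setminus\sigma}$, the last equality because $\sigma\subset\mathbb{N}_L$. Hence
\begin{equation*}
  \mathrm{e}^{\mathrm{i}\frac{\pi}{2}\Delta_L}Z_\sigma = Z_{\mathbb{N}_L\setminus\sigma},\qquad \sigma\in\Gamma_L .
\end{equation*}

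With this in hand the equivalences are immediate. For (1)$\Leftrightarrow$(3): by the last display, $\mathrm{e}^{\mathrm{i}\frac{\pi}{2}\Delta_L}Z_\sigma=Z_\tau$ holds iff $Z_\tau=Z_{\mathbb{N}_L\setminus\sigma}$, and since $\{Z_\gamma\mid\gamma\in\Gamma_L\}$ is orthonormal, hence a one-to-one indexed family, this is equivalent to $\tau=\mathbb{N}_L\setminus\sigma$. For (2)$\Leftrightarrow$(3): since $\sigma\bigtriangleup\tau\subset\mathbb{N}_L$ and $\#(\mathbb{N}_L)=L+1$, the condition $\#(\sigma\bigtriangleup\tau)=L+1$ holds iff $\sigma\bigtriangleup\tau=\mathbb{N}_L$, i.e. iff every element of $\mathbb{N}_L$ lies in exactly one of $\sigma,\tau$, which (as $\tau\subset\mathbb{N}_L$) says precisely $\tau=\mathbb{N}_L\setminus\sigma$. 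Finally, under these equivalent conditions, $\mathrm{e}^{\mathrm{i}\frac{\pi}{2}\Delta_L}Z_\sigma=Z_\tau$ yields $\big|\langle \mathrm{e}^{\mathrm{i}\frac{\pi}{2}\Delta_L}Z_\sigma,Z_\tau\rangle\big|=\langle Z_\tau,Z_\tau\rangle=1$, which is exactly the PST condition \eqref{eq-PST} at $t_0=\tfrac{\pi}{2}$.

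The only genuine content is the operator identity $\mathrm{e}^{\mathrm{i}\frac{\pi}{2}\Delta_L}=\Xi_{\mathbb{N}_L}$; everything after it is bookkeeping with symmetric differences. So the step I expect to be the crux — though it is short — is the spectral-calculus computation $\mathrm{e}^{\mathrm{i}\frac{\pi}{2}(I-\Xi_k)}=\Xi_k$, since this is exactly where the special value $t=\tfrac{\pi}{2}$ is used: for generic $t$ the product $\prod_{k}\mathrm{e}^{\mathrm{i}t(I-\Xi_k)}$ does not collapse to a single monomial in the $\Xi_k$'s.
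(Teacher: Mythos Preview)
Your proof is correct, and it takes a genuinely different route from the paper's. The paper works entirely in the eigenbasis $\{\widehat{Z}^{(L)}_\gamma\}$: it uses the diagonal representation of $\mathrm{e}^{\mathrm{i}\frac{\pi}{2}\Delta_L}$ from Theorem~\ref{thr-3-4}, expands $\mathrm{e}^{\mathrm{i}\frac{\pi}{2}\Delta_L}Z_\sigma$ as a signed sum over $\gamma\in\Gamma_L$ via Proposition~\ref{prop-3-6}, does the same for $Z_{\mathbb{N}_L\setminus\sigma}$, and matches the two expansions term by term through the sign identity $(-1)^{L+1-\#(\gamma)}(-1)^{\#(\sigma\setminus\gamma)}=(-1)^{L+1-\#(\sigma\cup\gamma)}=(-1)^{\#((\mathbb{N}_L\setminus\sigma)\setminus\gamma)}$. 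You instead bypass the spectral decomposition altogether: using only the commutativity \eqref{eq-2-7} and the involution property $\Xi_k^2=I$, you factor $\mathrm{e}^{\mathrm{i}\frac{\pi}{2}\Delta_L}=\prod_k \mathrm{e}^{\mathrm{i}\frac{\pi}{2}(I-\Xi_k)}$ and collapse each factor to $\Xi_k$ via Euler's formula, obtaining the clean operator identity $\mathrm{e}^{\mathrm{i}\frac{\pi}{2}\Delta_L}=\Xi_{\mathbb{N}_L}$. Your argument is shorter, needs none of the machinery of Subsection~\ref{subsec-3-1}, and delivers Corollary~\ref{coro-3-17} for free; the paper's approach, on the other hand, is closer in spirit to the standard spectral treatment of PST in the literature and illustrates how the explicit eigenstructure built earlier feeds into the PST result.
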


\begin{proof}
It is easy to see that under condition (1) the walk $\Delta_L$ has PST from $\sigma$ to $\tau$ at time $t=\frac{\pi}{2}$.
Next, we verify the equivalency of these three conditions. Clearly, condition (2) is equivalent to condition (3). So, we need only to show
that condition (1) and condition (3) are equivalent each other.

In fact, by using diagonal representation (\ref{unitary-diagonal}), we get
\begin{equation*}
\begin{split}
\mathrm{e}^{\mathrm{i}\frac{\pi}{2}\Delta_L} Z_{\sigma}
 & = \sum_{\gamma\in \Gamma_L} \mathrm{e}^{\mathrm{i}(L+1 - \#(\gamma))\pi}\big\langle \widehat{Z}^{(L)}_{\gamma},Z_{\sigma}\big\rangle \widehat{Z}^{(L)}_{\gamma}\\
 & = \frac{1}{\sqrt{2^{L+1}}}\sum_{\gamma\in \Gamma_L} (-1)^{L+1 - \#(\gamma)} (-1)^{\#(\sigma\setminus \tau)}\widehat{Z}^{(L)}_{\gamma}\\
 & = \frac{1}{\sqrt{2^{L+1}}}\sum_{\gamma\in \Gamma_L} (-1)^{L+1 - \#(\gamma)} (-1)^{-\#(\sigma\setminus \tau)}\widehat{Z}^{(L)}_{\gamma}\\
 & = \frac{1}{\sqrt{2^{L+1}}}\sum_{\gamma\in \Gamma_L} (-1)^{L+1 - \#(\sigma\cup\gamma)} \widehat{Z}^{(L)}_{\gamma}.
\end{split}
\end{equation*}
On the other hand, by Theorem~\ref{thr-3-2}, we have
\begin{equation*}
\begin{split}
  Z_{\mathbb{N}_L\setminus \sigma}
   = \sum_{\gamma\in \Gamma_L} \big\langle \widehat{Z}^{(L)}_{\gamma}, Z_{\mathbb{N}_L\setminus \sigma}\big\rangle \widehat{Z}^{(L)}_{\gamma}
   &= \frac{1}{\sqrt{2^{L+1}}}\sum_{\gamma\in \Gamma_L} (-1)^{\#((\mathbb{N}_L\setminus \sigma)\setminus \gamma)} \widehat{Z}^{(L)}_{\gamma}\\
   &= \frac{1}{\sqrt{2^{L+1}}}\sum_{\gamma\in \Gamma_L} (-1)^{\#((\mathbb{N}_L\setminus (\sigma\cup \gamma))} \widehat{Z}^{(L)}_{\gamma}\\
   & = \frac{1}{\sqrt{2^{L+1}}}\sum_{\gamma\in \Gamma_L} (-1)^{L+1 -\#(\sigma\cup \gamma)} \widehat{Z}^{(L)}_{\gamma}.
\end{split}
\end{equation*}
Thus, we come to a useful equality of the form $\mathrm{e}^{\mathrm{i}\frac{\pi}{2}\Delta_L} Z_{\sigma} =Z_{\mathbb{N}_L\setminus \sigma}$.
With this equality in mind, we can immediately see the equivalence of condition (1) and condition (3).
\end{proof}

Recall that $\mathbb{N}_L=\{0,1,\cdots,L+1\}$ and $\Gamma_L$ is the power set of $\mathbb{N}_L$.
Thus, $\mathbb{N}_L\setminus \sigma \in\Gamma_L$ whenever $\sigma\in\Gamma_L$.
This together with the above proof justifies the next corollary.

\begin{corollary}\label{coro-3-17}
For all $\sigma\in \Gamma_L$, it holds true that\, $\mathrm{e}^{\mathrm{i}\frac{\pi}{2}\Delta_L} Z_{\sigma} =Z_{\mathbb{N}_L\setminus \sigma}$,
in particular the walk $\Delta_L$ has PST from node $\sigma$ to node $\mathbb{N}_L\setminus \sigma$
at time $t=\frac{\pi}{2}$.
\end{corollary}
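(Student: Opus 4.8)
The plan is to reduce the whole statement to a single clean identity on basis vectors: for every node $\sigma\in\Gamma_L$,
\[
\mathrm{e}^{\mathrm{i}\frac{\pi}{2}\Delta_L}Z_{\sigma} = Z_{\mathbb{N}_L\setminus\sigma}.
\]
Granting this, the theorem falls out quickly. The equivalence of (2) and (3) needs no computation: since $\sigma,\tau\subseteq\mathbb{N}_L$ and $\#(\mathbb{N}_L)=L+1$, the relation $\#(\sigma\bigtriangleup\tau)=\#(\sigma\setminus\tau)+\#(\tau\setminus\sigma)=L+1$ forces $\sigma$ and $\tau$ to be disjoint with union $\mathbb{N}_L$, i.e. $\tau=\mathbb{N}_L\setminus\sigma$, and conversely $\tau=\mathbb{N}_L\setminus\sigma$ gives $\sigma\bigtriangleup\tau=\mathbb{N}_L$. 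For the equivalence of (1) and (3), condition (1) reads $Z_{\tau}=\mathrm{e}^{\mathrm{i}\frac{\pi}{2}\Delta_L}Z_{\sigma}=Z_{\mathbb{N}_L\setminus\sigma}$, and since $\{Z_{\gamma}\mid\gamma\in\Gamma_L\}$ is an orthonormal basis this is equivalent to $\tau=\mathbb{N}_L\setminus\sigma$.

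To prove the displayed identity I would expand both sides in the eigenbasis $\{\widehat{Z}^{(L)}_{\gamma}\mid\gamma\in\Gamma_L\}$ furnished by Theorem~\ref{thr-3-2}. Applying the diagonal representation (\ref{unitary-diagonal}) of $\mathrm{e}^{\mathrm{i}t\Delta_L}$ at $t=\frac{\pi}{2}$ and using Proposition~\ref{prop-3-6}, the coefficient of $\widehat{Z}^{(L)}_{\gamma}$ in $\mathrm{e}^{\mathrm{i}\frac{\pi}{2}\Delta_L}Z_{\sigma}$ is $\frac{1}{\sqrt{2^{L+1}}}\,\mathrm{e}^{\mathrm{i}(L+1-\#(\gamma))\pi}(-1)^{\#(\sigma\setminus\gamma)}=\frac{1}{\sqrt{2^{L+1}}}(-1)^{L+1-\#(\gamma)+\#(\sigma\setminus\gamma)}$, while, again by Proposition~\ref{prop-3-6}, the coefficient of $\widehat{Z}^{(L)}_{\gamma}$ in $Z_{\mathbb{N}_L\setminus\sigma}$ is $\frac{1}{\sqrt{2^{L+1}}}(-1)^{\#((\mathbb{N}_L\setminus\sigma)\setminus\gamma)}$. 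It thus suffices to check, for every $\gamma\in\Gamma_L$, the parity identity
\[
L+1-\#(\gamma)+\#(\sigma\setminus\gamma)\ \equiv\ \#\big((\mathbb{N}_L\setminus\sigma)\setminus\gamma\big)\pmod 2 .
\]
Both sides reduce to $L+1-\#(\sigma\cup\gamma)$ modulo $2$: on the left, $\#(\sigma\setminus\gamma)=\#(\sigma\cup\gamma)-\#(\gamma)$, so the left side equals $L+1+\#(\sigma\cup\gamma)-2\#(\gamma)\equiv L+1-\#(\sigma\cup\gamma)$; on the right, $(\mathbb{N}_L\setminus\sigma)\setminus\gamma=\mathbb{N}_L\setminus(\sigma\cup\gamma)$, which has cardinality $L+1-\#(\sigma\cup\gamma)$. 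Hence the two expansions coincide term by term, which gives the identity.

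Finally, under condition (1) we have $|\langle\mathrm{e}^{\mathrm{i}\frac{\pi}{2}\Delta_L}Z_{\sigma},Z_{\tau}\rangle|=|\langle Z_{\tau},Z_{\tau}\rangle|=1$, which is exactly the defining property (\ref{eq-PST}) of PST from $\sigma$ to $\tau$ at time $t=\frac{\pi}{2}$; this also yields the closing assertion of the theorem and, together with $\mathbb{N}_L\setminus\sigma\in\Gamma_L$, Corollary~\ref{coro-3-17}. I expect the only real obstacle to be the parity bookkeeping in the exponents — in particular keeping $\#(\sigma\setminus\gamma)$ distinct from $\#(\gamma\setminus\sigma)$ and routing everything through the symmetric quantity $\#(\sigma\cup\gamma)$; the rest is a direct application of the spectral decomposition of $\Delta_L$ (Theorem~\ref{thr-3-4}) together with Proposition~\ref{prop-3-6}.
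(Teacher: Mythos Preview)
Your proof is correct and follows essentially the same route as the paper: both arguments expand $\mathrm{e}^{\mathrm{i}\frac{\pi}{2}\Delta_L}Z_{\sigma}$ and $Z_{\mathbb{N}_L\setminus\sigma}$ in the eigenbasis $\{\widehat{Z}^{(L)}_{\gamma}\}$ via the diagonal representation~(\ref{unitary-diagonal}) and Proposition~\ref{prop-3-6}, then match the sign of each coefficient by reducing both exponents to $L+1-\#(\sigma\cup\gamma)$. The only cosmetic difference is that the paper passes from $(-1)^{\#(\sigma\setminus\gamma)}$ to $(-1)^{-\#(\sigma\setminus\gamma)}$ and combines exponents directly, whereas you phrase the same step as a parity congruence modulo~$2$; the content is identical.
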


\section{Conclusion remarks}\label{sec-4}

As indicated above, we call the function $\overline{P}(\cdot)$ defined by (\ref{eq-time-average distribution}) the time-average probability distribution of the walk $\Delta_L$.
In fact, a justification for the name lies in the following observations.
Just consider the function $t\mapsto P_t(\sigma)$ on $\Gamma_L$, where $P_t(\sigma)$ is the probability that the walker is found at node $\sigma\in $ at time $t$.
According to Theorem~\ref{thr-periodicity}, for each $\sigma \in \Gamma_L$, $t\mapsto P_t(\sigma)$ is a bounded periodic function with $\pi$ being a period,
which together with a limit procedure yields
\begin{equation*}
  \lim_{T\to +\infty}\frac{1}{2T}\int_{-T}^TP_t(\sigma)dt = \frac{1}{\pi}\int_0^{\pi}P_t(\sigma)dt.
\end{equation*}
Thus $\overline{P}(\cdot)$ can be redefined as
\begin{equation}\label{eq}
  \overline{P}(\sigma) =  \lim_{T\to +\infty}\frac{1}{2T}\int_{-T}^TP_t(\sigma)dt,\quad \sigma \in \Gamma_L,
\end{equation}
and $\overline{P}(\sigma)$ is exactly the average probability of finding the walker at node $\sigma$ over the whole time interval $\mathbb{R}=(-\infty,\infty)$.


\begin{thebibliography}{99}

\bibitem{alvir} B. Alvir, S. Dever, B. Lovitz, J. Myer, C. Tamon, Y. Xu, and  H. Zhan,  Perfect state transfer in Laplacian quantum walk,
   J. Algebraic Combin. 43 (2016), no. 4, 801-826.

\bibitem{balu}  R. Balu, Quantum walks on regular graphs with realizations in a system of anyons, Quantum Inf. Process. 21 (2022), Article number 177.

\bibitem{bose} S. Bose, Quantum communication through an unmodulated spin chain,  Phys. Rev. Lett. 91 (2003), no. 20, 207901.

\bibitem{chen-1} J.S Chen, Invariant states for a quantum Markov semigroup constructed from quantum Bernoulli noises, Open Syst. Inf. Dyn. 25 (2018), no. 4, 1850019.

\bibitem{chen-2} J.S Chen, Quantum Feller semigroup in terms of quantum Bernoulli noises, Stoch. Dyn. 21 (2021), no. 4, 2150015.

\bibitem{cohn} D.L. Cohn, Measure Theory, 2nd Edition, Springer, New York (2013).

\bibitem{cheung-godsil} W.-C. Cheung and C. Godsil, Perfect state transfer in cubelike graphs, Linear Algebra Appl. 435 (2011), no. 10, 2468-2474.

\bibitem{childs} A. M. Childs, On the relationship between continuous- and discrete-time quantum walk, Commun. Math. Phys. 294 (2010), 581-603.

\bibitem{christandl} M. Christandl,  N. Datta, A. Ekert and A. Landahl, Perfect state transfer in quantum spin networks, Phys. Rev. Lett. 92 (2004), 187902.

\bibitem{farhi-gutmann}  E. Farhi and S. Gutmann, Quantum computation and decision trees, Phys. Rev. A 58 (1998), 915-928.

\bibitem{godsil} C. Godsil, State transfer on graphs, Discrete Math. 312 (2012), no. 1, 129-147.

\bibitem{kempton} M. Kempton, G. Lippner and S-T. Yau, Perfect state transfer on graphs with a potential, Quantum Inf. Comput. 17 (2017), no. 3-4, 303-327.

\bibitem{kendon} V. Kendon and C. Tamon, Perfect state transfer in quantum walks on graphs, J. Comput. Theoret. Nanosci. 8 (2011), no. 3, 422-433.

\bibitem{konno} N. Konno, Limit theorem for continuous-time quantum walk on the line, Phys. Rev. E, 72 (2005), no. 2, 026113.

\bibitem{lin} Y. Lin, G. Lippner and S-T. Yau, Quantum tunneling on graphs, Comm. Math. Phys. 311 (2012), no. 1, 113-132.

\bibitem{obata} N. Obata, Spectral Analysis of Growing Graphs: A Quantum Probability Point of View, Springer, Singapore (2017).

\bibitem{wangcs-1} C.S. Wang and J.S. Chen, Quantum Markov semigroups constructed from quantum Bernoulli noises, J. Math. Phys. 57 (2016), no. 2, 023502.

\bibitem{wangcs-2} C.S. Wang, Y.L. Tang and S.L. Ren, Weighted number operators on Bernoulli functionals and quantum exclusion semigroups,
 J. Math. Phys. 60 (2019), no. 11, 113506.

\bibitem{wangcs-4} C.S. Wang, C. Wang, S.L. Ren and Y.L. Tang, Open quantum random walk in terms of quantum Bernoulli noise, Quantum Inf. Process.
17 (2018), no. 3, Paper No. 46.

\bibitem{wangcs-3} C.S. Wang and X.J. Ye, Quantum walk in terms of quantum Bernoulli noises, Quantum Inf. Process. 15 (2016), no. 5, 1897-1908.

\bibitem{venegas} S. E. Venegas-Andraca, Quantum walks: a comprehensive review, Quantum Inf. Process. 11 (2012), 1015-1106.

\end{thebibliography}
\end{document}